\documentclass[12pt]{article}

\usepackage[utf8]{inputenc}

\usepackage[margin=1in]{geometry}
\usepackage{setspace}

\usepackage{amsmath, amsthm, amssymb, amsfonts}
\usepackage{bbm}
\usepackage{bigints}
\usepackage{changepage}
\usepackage{rotating}

\usepackage{graphicx}
\graphicspath{{Pictures/}}
\usepackage{caption}
\usepackage{subcaption}
\usepackage{float}

\usepackage{booktabs}
\usepackage{multirow}

\usepackage{tikz}
\makeatletter
\IfFileExists{tikzlibrarybayesnet.code.tex}{\usetikzlibrary{bayesnet}}{}
\makeatother

\usepackage[toc,page]{appendix}

\usepackage{natbib}
\usepackage{url}

\usepackage[ruled,vlined,linesnumbered]{algorithm2e}
\usepackage{algorithmicx}

\usepackage{xcolor}
\usepackage{enumitem}
\usepackage{enumerate}
\usepackage{authblk}
\usepackage{framed}
\usepackage[normalem]{ulem}
\usepackage{titlesec}
\usepackage{comment}
\usepackage{pdfpages}
\usepackage{epstopdf}

\usepackage[colorlinks,citecolor=blue,urlcolor=blue]{hyperref}

\newtheorem{theorem}{\textbf{Theorem}}

\newtheorem{proposition}{Proposition}

\def\bm{\boldsymbol}
\newcommand{\partref}[1]{Part~\ref{#1}}

\usepackage{amsmath}
\usepackage{amsthm}
\usepackage{amssymb}
\usepackage{graphicx}
\usepackage{color}
\usepackage{enumerate}
\usepackage{natbib}
\usepackage{url} 
\usepackage{multirow}
\usepackage{lineno}
\usepackage{makecell}
\usepackage{hyperref}

\def\0{\boldsymbol 0}

 \title{\bf A Frequency-Domain Approach for Integrating Multiple Functional Time Series}
  \author{Zerui Guo \\
    School of Mathematics, Sun Yat-sen University\\
    Jianbin Tan\\
    Department of Biostatistics and Bioinformatics, Duke University\\
    and \\
    Hui Huang \\
    Center for Applied Statistics and School of Statistics, Renmin University of China}

\date{}

\begin{document}
\maketitle

\begin{abstract}
Integrative analysis of multivariate functional time series (MFTS) is both critical and challenging across many scientific domains. Such data often exhibit complex multi-way dependencies arising from within-curve structures, temporal correlations across curves, and cross-subject interactions, underscoring the need for efficient methods that can jointly capture these dependencies and support accurate downstream analyses. 
In this work, we propose a novel frequency-domain framework based on a marginal dynamic Karhunen--Lo\`eve expansion. The key idea is to integrate individual spectral densities of the MFTS to construct a marginal spectral operator, whose eigenfunctions yield optimal functional filters. These filters transform complex functional observations into a structured multivariate time series representation, providing a powerful foundation for joint modeling and estimation. 
Through extensive simulation studies, we demonstrate the superior performance of the proposed approach. We further validate its practical utility through an application to the imputation and forecasting of air pollutant concentration trajectories in China.
\end{abstract}

{\small \textsc{Keywords:} {\em Functional data, Dynamic functional principal component analysis, Fourier transformation, Integrative analysis, PM$_{2.5}$}}

\newpage
\clearpage
\setstretch{2} 
\section{Introduction}\label{section_intro}

Recent advances in data collection techniques have greatly expanded the availability of functional time series collected from multiple subjects, such as environmental PM$_{2.5}$ concentration curves, energy consumption readings, age-specific mortality data, and traffic flow profiles \citep{tang2022clustering, ma2024network, tan2024graphical, chang2024modelling}. 
These datasets, commonly referred to as multivariate functional time series (MFTS), often exhibit complex multi-way dependencies, including within-curve structures, temporal correlations across curves, and cross-subject dependencies \citep{guo2023consistency, guo2025factor}. 
The coexistence of these dependencies can substantially undermine the statistical efficiency and accuracy of downstream analyses for MFTS, such as missing-data imputation and forecasting \citep{rubin2020sparsely, guo2024unified}. 
This motivates the development of an integrative analysis framework that can flexibly accommodate multilevel dependence structures in MFTS, improving estimation efficiency and enhancing the reliability of functional forecasting.

Conventional methods for MFTS primarily focus on integrating temporal dependencies, that is, constructing subject-specific representations based on each series’ observations and performing estimation within each series using these individual representations.
Two major paradigms are commonly used: time-domain approaches, which integrate information from the lag-$h$ ($h \geq 0$) covariance structures of individual series \citep{bathia2010identifying, gao2019high}, and frequency-domain approaches, which are built upon spectral density functions derived from these lag covariances \citep{hormann2015dynamic, hallin2018optimal}. 
The time-domain perspective provides a direct framework for modeling autocorrelation structures, and the frequency-domain perspective offers a complementary view by decomposing frequency variation into dynamic components, yielding theoretically optimal representations for dynamic functional data \citep{hormann2015dynamic, tan2024functional, guo2024unified}.
However, these approaches primarily rely on individual-specific models for data representation and therefore neglect cross-individual dependencies in MFTS. This omission may lead to inefficient estimation and reduced accuracy in functional data analysis tasks.

Alternatively, marginal principal component analysis (MPCA) provides an effective approach for integrating information across individuals, enabling tractable analysis of multivariate functional data for downstream statistical tasks \citep{park2015longitudinal, chen2017modelling, koner2024profit}. 
Specifically, these methods aggregate individual covariance functions to derive a set of eigenbases that capture marginal functional patterns shared across subjects. 
Based on these bases, each subject’s data are then modeled using random scores combined with the shared marginal structure. 
Compared with individual-based models, the marginalization approach integrates information from all subjects when estimating the marginal bases, potentially improving statistical efficiency in functional data modeling. 
However, existing marginalization methods mainly account for cross-subject dependence while overlooking temporal dependence within each series, which limits their ability to capture dynamic structures and to produce accurate forecasts.

To address these limitations, we propose a novel frequency-domain integrative framework, termed Spectral Marginal Principal Component Analysis (Spectral MPCA), that performs integrated modeling across both the individual and temporal dimensions of MFTS. 
Specifically, our framework first integrates observations within each subject through lag-$h$ covariance functions to capture within- and cross-curve dependencies. 
Second, it transforms the individual lag-covariance functions into the frequency domain to construct subject-specific spectral density kernels, effectively aggregating temporal dependencies in the observed data. 
Third, it integrates cross-subject interactions through the marginalization of these spectral density kernels, yielding an efficient marginal dynamic Karhunen--Lo\`eve (KL) expansion that accounts for within-curve, cross-curve, and cross-individual dependencies. 
We establish a theoretically grounded foundation for this integrative framework and demonstrate its effectiveness through extensive simulation studies. Finally, we illustrate the superiority of our method for missing-data imputation and prediction tasks using real-world MFTS data. A flow chart of our method is shown in Figure~\ref{flowchart}

\begin{figure}[!ht]
\captionsetup{width=\linewidth}
 \centering
		\includegraphics[width=0.8\linewidth]{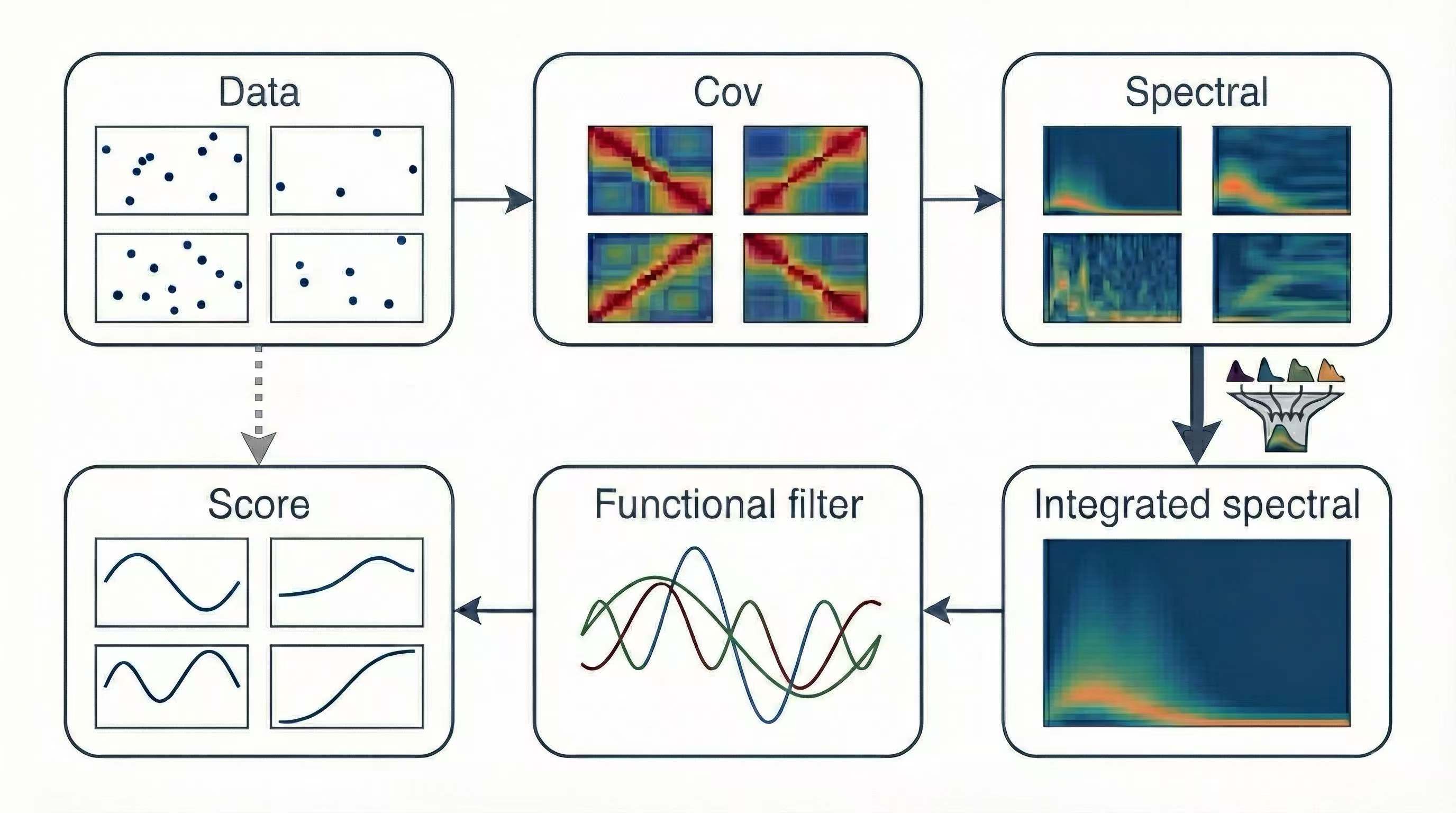}
	\caption{A flow chart of our method.}
 \label{flowchart}
\end{figure}

The remainder of this paper is organized as follows. 
Section~\ref{section_method} introduces the proposed Spectral MPCA framework for integrating multi-way dependencies in MFTS. 
Section~\ref{section_estimation} details the estimation procedures, including the marginalization of spectral density kernels and methods for data imputation and forecasting. 
Section~\ref{section_simulation} presents extensive simulation studies that evaluate the efficacy of Spectral MPCA and compare its performance with existing methods. 
Finally, Section~\ref{section_casestudy} demonstrates the practical utility of the proposed method through an application to PM$_{2.5}$ concentration data.

\section{Integrative Analysis of Multivariate Functional Time Series} \label{section_method}

\subsection{Notations}
Let $L^2\left(\mathcal{T}, \mathbb{C}^{p}\right)$ denote the Hilbert space of square-integrable functions from a compact domain $\mathcal{T}$ to the $p$-dimensional complex Euclidean space $\mathbb{C}^{p}$, equipped with the inner product $\langle \boldsymbol{f}, \boldsymbol{g}\rangle_{p}:=\int_{\mathcal{T}} \{\boldsymbol{f}(t)\}^{*} \boldsymbol{g}(t)\,\mathrm{d}t$ and the induced norm $\|\boldsymbol{f}\|_{p}:=\langle \boldsymbol{f}, \boldsymbol{f}\rangle_{p}^{1/2}$, $\forall\, \boldsymbol{f}, \boldsymbol{g} \in L^2\left(\mathcal{T}, \mathbb{C}^{p}\right)$, where $\{\cdot\}^{*}$ represents the conjugate transpose operation on a complex-valued vector or matrix. For convenience, we take $\mathcal{T} = [0,1]$ in what follows. When $p=1$, the notations $\langle \cdot, \cdot \rangle_{p}$ and $\|\cdot\|_{p}$ are simplified to $\langle \cdot, \cdot \rangle$ and $\|\cdot\|$, respectively. Furthermore, define $\mathcal{H}([0,1], \mathbb{R})$ as a collection of functions $\{ h_l(\cdot) \in L^2([0,1], \mathbb{R}) : l \in \mathbb{Z} \}$, for which there exists a function $H(\cdot \mid \omega) \in L^2([0,1], \mathbb{C})$ such that $\int_{-\pi}^{\pi} \|H(\cdot \mid \omega)\|^2 \,\mathrm{d}\omega < \infty$ and $\lim_{L \to \infty} \int_{-\pi}^{\pi} \|H(\cdot \mid \omega) - \frac{1}{2\pi} \sum_{|l| \leq L} h_l(\cdot)\exp(\mathrm{i} l \omega)\|^2 \,\mathrm{d}\omega = 0$, where $\mathrm{i}$ denotes the imaginary unit. Throughout, $|\cdot|$ and $\overline{(\cdot)}$ denote the modulus and the conjugate of a complex number or vector, $(\cdot)^{\top}$ represents the transpose of a vector or matrix, and $[c] := \{1, \ldots, c\}$ denotes the index set for any positive integer $c$.

Consider a $p$-dimensional MFTS $\boldsymbol{X}_j(t):=\{X_{1j}(t), \ldots, X_{pj}(t)\}^{\top}$ for $t \in [0,1]$, where $j \in \mathbb{Z}$ is a discrete-time index and each element satisfies $X_{ij}(\cdot) \in L^2\left([0,1], \mathbb{R}\right)$ for $i \in [p]$. For notational simplicity, we represent $\boldsymbol{X}_j(\cdot)$ and ${X}_{ij}(\cdot)$ as $\boldsymbol{X}_j$ and ${X}_{ij}$. Assume that $\{\boldsymbol{X}_j; j \in \mathbb{Z}\}$ is weakly stationary, i.e., $\operatorname{E}\{\boldsymbol{X}_j(t)\}$ and $\operatorname{cov}\{\boldsymbol{X}_{j+h}(t), \boldsymbol{X}_{j}(s)\}$ do not depend on $j$ for any $h \in \mathbb{Z}$ and $t,s \in [0,1]$. This assumption is widely used in MFTS studies \citep{hallin2023factor, chang2023autocovariance, tan2024graphical}. We define $\boldsymbol{\mu}(t) := \operatorname{E}\{\boldsymbol{X}_j(t)\}$ and $\boldsymbol{C}_{h}(t,s) := \operatorname{cov}\{\boldsymbol{X}_{j+h}(t), \boldsymbol{X}_{j}(s)\}$ for $t,s \in [0,1]$ and $j, h \in \mathbb{Z}$, which are the mean and autocovariance functions of $\{\boldsymbol{X}_j; j \in \mathbb{Z}\}$, respectively. Under the stationarity assumption, the spectral density functions $\{\boldsymbol{f}(\cdot, \cdot \mid \omega); \omega \in [-\pi, \pi]\}$ of MFTS are defined via the Fourier transform as
\begin{align}
\boldsymbol{f}(t, s \mid \omega)=\frac{1}{2 \pi} \sum_{h \in \mathbb{Z}} \boldsymbol{C}_h(t, s) \exp (\mathrm{i} h \omega), \quad t, s \in[0,1],\ \omega \in[-\pi, \pi].
\label{f to r}
\end{align}
To ensure convergence, we impose a summability condition as
$$\sup_{i_1, i_2 \in [p]} \sum_{h \in \mathbb{Z}}\left\{\int_0^1 \int_0^1\left|C_{i_1i_2h}(t, s)\right|^2 \mathrm{d} t \mathrm{d} s\right\}^{1 / 2}<\infty,$$
where $C_{i_1i_2h}(t, s)$ is the ($i_1,i_2$)-th element of $\boldsymbol{C}_h(t, s)$ for $h \in \mathbb{Z}$ and $t,s \in [0,1]$.
Conversely, the autocovariance functions $\boldsymbol{C}_h(\cdot, \cdot)$ can be represented via the inverse Fourier transform as
\begin{align}
\boldsymbol{C}_h(t, s)=\int_{-\pi}^\pi \boldsymbol{f}(t, s \mid \omega) \exp (-\mathrm{i} h \omega) \mathrm{d} \omega, \quad t, s \in[0,1],\ h \in\mathbb{Z}.
\label{r to f}
\end{align}
Similarly, we denote the ($i_1,i_2$)-th element of $\boldsymbol{f}(t, s \mid \omega)$ by ${f}_{i_1i_2}(t, s \mid \omega)$ for $\omega \in [-\pi,\pi]$ and $t,s \in [0,1]$.

\subsection{Integrating Multi-way Dependence in the Frequency Domain}
We define a zero-mean MFTS as $\boldsymbol{\varepsilon}_{j}(t) := \boldsymbol{X}_j(t) - \boldsymbol{\mu}(t)$ for $t \in [0,1]$ and $j \in \mathbb{Z}$, where $\boldsymbol{\varepsilon}_{j}(\cdot) = \{\varepsilon_{1j}(\cdot), \ldots, \varepsilon_{pj}(\cdot)\}^{\top}$. For brevity, we write $\boldsymbol{\varepsilon}_{j}(\cdot)$ and $\varepsilon_{ij}(\cdot)$ as $\boldsymbol{\varepsilon}_{j}$ and $\varepsilon_{ij}$, respectively. 
The series $\{\boldsymbol{\varepsilon}_{j}; j \in \mathbb{Z}\}$ exhibits complex multi-way dependence structures, including within-curve structures, temporal correlations (across $j$), and cross-subject dependencies (across $i$). These structures are incorporated in the spectral density \eqref{f to r}. Therefore, efficient integrated analysis of MFTS can be implemented via statistical procedures based on the spectral density.

Before introducing our methods, we provide a more detailed review of existing methods. 
Typically, conventional approaches for MFTS model each series $\varepsilon_{ij}$ separately and primarily target within-subject temporal dependence, constructing subject-specific models in either the time domain or the frequency domain.
In the time domain, one establishes the subject-specific models from the lag-$h$ covariances, for example, the aggregated lag-$h$ covariance $\sum_{h \in \mathbb{Z}} C_{iih}(t_1,t_2)$ \citep{gao2019high} or autocovariance integrals $\sum_{h \geq 1} \int_0^1 C_{iih}(t,t_1) C_{iih}(t,t_2)\ \mathrm{d}t$ \citep{bathia2010identifying, chang2024modelling}. In the frequency domain, existing methods mainly decompose spectral densities $f_{ii}(\cdot, \cdot \mid \omega)$ as 
\begin{align}
    f_{ii}(t, s \mid \omega) = \sum_{k=1}^{\infty} \eta_{ik}(\omega) \overline{\psi_{ik}(t \mid \omega)} \psi_{ik}(s \mid \omega), \quad t,s \in [0,1], \ \omega \in [-\pi, \pi],
    \label{uni_decom_f}
\end{align}
and then apply the inverse Fourier transform to the eigenfunctions $\psi_{ik}(\cdot \mid \omega)$ to obtain a set of basis functions, termed functional filters \citep{hormann2015dynamic}. 
This yields a representation for $\varepsilon_{ij}$ in the frequency domain via the dynamic KL expansion
\begin{align}
\varepsilon_{ij}(t) = \sum_{k=1}^{\infty}\sum_{l \in \mathbb{Z}} \phi_{ikl}(t)\xi_{i(j+l)k} 
\quad \text{with} \quad 
\xi_{ijk} = \sum_{l \in \mathbb{Z}} \langle \varepsilon_{i(j-l)}, \phi_{ikl} \rangle, \quad t \in [0,1], 
\label{uni_dfpca}
\end{align}
where $\{\phi_{ikl}(\cdot) ; l \in \mathbb{Z}\}$ are functional filters, and $\xi_{ijk}$ are the corresponding FPC scores. This representation \eqref{uni_dfpca} achieves an optimal approximation of $\varepsilon_{ij}$ in the mean squared error sense \citep{hormann2015dynamic} and unifies time- and frequency-domain approaches through optimal functional filtering tools developed in \citet{guo2024unified}. However, these approaches typically fail to exploit information across subjects, resulting in inefficient estimation and forecasting for MFTS, especially when the available within-subject information is limited.

To integrate information from all subjects, we construct a frequency-domain marginalization by defining the marginal spectral density function as
\begin{align}
    f_{\mathcal{S}}(t,s \mid \omega) := \frac{1}{p} \sum_{i \in [p]} f_{ii}(t,s \mid \omega), \quad t,s \in [0,1], \ \omega \in [-\pi, \pi].
    \label{margin_spec}
\end{align}
For each $\omega \in [-\pi, \pi]$, $f_{\mathcal{S}}(\cdot, \cdot \mid \omega)$ is a symmetric and positive-definite function; the proof is provided in Proposition \ref{marginal_sd_property} in the Appendix. Consequently, by Mercer's theorem, it admits the spectral decomposition
\begin{align}
    f_{\mathcal{S}}(t,s \mid \omega) = \sum_{k=1}^{\infty} \eta_k(\omega) \overline{\psi_k(t \mid \omega)} \psi_k(s \mid \omega), \quad t,s \in [0,1], \ \omega \in [-\pi,\pi],
    \label{margin_decom_f}
\end{align}
where $\eta_k(\omega)$ and $\psi_k(\cdot\mid \omega) \in L^2([0,1],\mathbb{C})$ denote the $k$th eigenvalue and eigenfunction, respectively. 

The proposed marginal spectral density function \eqref{margin_spec} jointly captures multi-way dependence structures within MFTS in the frequency domain, sharing the same spirit as MPCA in marginalizing lag-$0$ covariances \citep{chen2017modelling}. Specifically, the individual dependence for each subject $i$, including within-curve structures and temporal correlations, is captured by the corresponding spectral density $f_{ii}(\cdot,\cdot \mid \omega)$ as in \eqref{f to r}, and information across subjects is integrated through averaging as in \eqref{margin_spec}. Here, $f_{\mathcal{S}}(\cdot, \cdot \mid \omega)$ serves as the foundation for our framework, and its eigenfunctions $\psi_{k}(\cdot \mid \omega)$ in \eqref{margin_decom_f} borrow strength from all individuals, yielding shared functional filters that enable efficient integrative analysis of MFTS.

\subsection{Marginal Dynamic Karhunen--Lo\`{e}ve Expansion}\label{Section_MDKL}

Based on \eqref{margin_decom_f}, we introduce a novel representation for $\boldsymbol{\varepsilon}_{j}$, termed the marginal dynamic KL expansion:
\begin{align}
    \boldsymbol{\varepsilon}_{j}(t) = \sum_{k \geq 1} \sum_{l \in \mathbb{Z}} \phi_{kl}(t) \boldsymbol{\xi}_{\cdot(j+l)k}, \quad t \in [0,1],
    \label{marginal_dfpca}
\end{align}
with $\phi_{kl}(\cdot) = \frac{1}{2\pi} \int_{-\pi}^{\pi} \psi_k(\cdot \mid \omega) \exp(-\mathrm{i} l \omega)\,\mathrm{d} \omega$,
where $\psi_k(\cdot \mid \omega)$ are the eigenfunctions of the marginal spectral density $f_{\mathcal{S}}(\cdot, \cdot \mid \omega)$ as in \eqref{margin_decom_f}. 
Here, $\boldsymbol{\xi}_{\cdot jk} := (\xi_{1jk}, \ldots, \xi_{pjk})^{\top}$ with $\xi_{ijk} = \sum_{l \in \mathbb{Z}} \langle \varepsilon_{i(j-l)}, \phi_{kl} \rangle$ denotes the vector of the $k$th scores for $\boldsymbol{\varepsilon}_{j}$. For each $i$, the sequence $\boldsymbol{\xi}_{i\cdot k} := \{\xi_{ijk}; j \in \mathbb{Z}\}$ forms a zero-mean time series with spectral density
\begin{align}
    \tilde{\eta}_{ik}(\omega) := \int_0^1 \int_0^1 \psi_k(t \mid \omega) f_{ii}(t,s \mid  \omega) \overline{\psi_k(s \mid \omega)}\,\mathrm{d}t\,\mathrm{d}s, \quad \omega \in [-\pi,\pi], \ k \geq 1;
    \label{margin_score_spec}
\end{align}
see Proposition \ref{marginal_score_sd} in the Appendix for a detailed proof.

Notably, $\psi_k(\cdot \mid \omega)$ in \eqref{margin_decom_f} is unique only up to multiplication by a function on the complex unit circle $\mathcal{M}:=\big\{\nu:[-\pi,\pi] \rightarrow \mathbb{C} ; |\nu(\omega)| = 1, \nu(\omega) = \overline{\nu(-\omega)}\big\}$, i.e., for any $\nu_k(\cdot) \in \mathcal{M}$, $\psi_{k}(\cdot \mid \omega)\nu_{k}(\omega)$ is also an eigenfunction of $f_{\mathcal{S}}(\cdot,\cdot \mid \omega)$. This non-uniqueness issue may lead to redundant representations and degrade forecasting accuracy \citet{guo2024unified}. To avoid this issue, we extend the optimal functional filter criterion in \citep{guo2024unified} to the MFTS setting. Specifically, we select the optimal multiplier $\tilde{\nu}_k(\cdot)$ within the complex unit circle $\mathcal{M}$ as
\begin{align}
\mathop{\arg\max}_{\nu_k(\cdot)\in \mathcal{M}} \frac{1}{4 \pi^2}  \int_{-\pi}^{\pi} \int_{-\pi}^{\pi} \Psi_k(\omega_1, \omega_2) \overline{\nu_k(\omega_1)}\nu_k(\omega_2) \ \mathrm{d} \omega_1 \mathrm{d} \omega_2, \quad k \geq 1,
    \label{marginal_opt_ff_def2}
\end{align}
where the kernels $\{\Psi_k(\cdot, \cdot); k \geq 1\}$ are defined as $\Psi_k(\omega_1, \omega_2) = \int_{0}^1\overline{\psi_k(t \mid \omega_1)}\psi_k(t \mid \omega_2)\,\mathrm{d} t$ for $\omega_1, \omega_2 \in [-\pi, \pi]$, with $\psi_k(\cdot \mid \omega)$ being any valid $k$th eigenfunction of $f_{\mathcal{S}}(\cdot,\cdot\mid \omega)$. Accordingly, the functional filters $\phi_{kl}(\cdot)$ in \eqref{marginal_dfpca} are constructed as
\begin{align}
    \phi_{kl}(t)= \frac{1}{2\pi} \int_{-\pi}^{\pi}\psi_k(t \mid \omega)\tilde{\nu}_k(\omega) \exp (-\operatorname{i}l\omega)\,\mathrm{d} \omega, \quad t \in [0,1], \ l \in \mathbb{Z}.
    \label{marginal_opt_ff}
\end{align}

The proposed framework offers several advantages for modeling MFTS with multi-way dependence. First, compared with subject-specific models \citep{hormann2015dynamic,gao2019high, chang2024modelling, guo2024unified}, \eqref{marginal_dfpca} leverages cross-subject information to learn shared filters together with structured score processes, offering a parsimonious modeling framework for MFTS. Second, by modeling serial dependence in the frequency domain, it attains an $L^2$-optimal dimension reduction for MFTS, as established in Theorem \ref{opt_approximation} below, providing a suitable modeling framework for MFTS reconstruction and forecasting. We therefore call this framework Spectral Marginal Principal Component Analysis (Spectral MPCA).

\begin{theorem}
For any positive integer $K$, we define $\tilde{\boldsymbol{\varepsilon}}_{j}^{K} = (\tilde{\varepsilon}_{1j}^{K}, \ldots, \tilde{\varepsilon}_{pj}^{K})^{\top}$, where
\begin{align*}
    \tilde{\varepsilon}_{ij}^{K}(\cdot) := \sum_{k \in [K]} \sum_{l \in \mathbb{Z}} \tilde{\phi}_{kl}(\cdot) \tilde{\xi}_{i(j+l)k}, \quad \text{with} \ \ \tilde{\xi}_{ijk} = \sum_{l \in \mathbb{Z}} \langle \varepsilon_{i(j-l)}, \tilde{v}_{kl} \rangle, \ i \in [p].
\end{align*}
Here, $\{\tilde{\phi}_{kl}(\cdot); l \in \mathbb{Z}\}$ and $\{\tilde{v}_{kl}(\cdot); l \in \mathbb{Z}\}$ are sequences belonging to $\mathcal{H}([0,1], \mathbb{R})$. Then,
\begin{align*}
    \operatorname{E} \| \boldsymbol{\varepsilon}_{j} - \boldsymbol{\varepsilon}_{j}^{K} \|_{p}^{2} \leq \operatorname{E} \| \boldsymbol{\varepsilon}_{j} - \tilde{\boldsymbol{\varepsilon}}_{j}^{K} \|_{p}^{2},
\end{align*}
where $\boldsymbol{\varepsilon}^{K}_{j}(\cdot) = \sum_{k \in [K]} \sum_{l \in \mathbb{Z}} \phi_{kl}(\cdot) \boldsymbol{\xi}_{\cdot(j+l)k}$ denotes the truncation of the marginal dynamic KL expansion.
\label{opt_approximation}
\end{theorem}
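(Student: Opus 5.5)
Following the argument of \citet{hormann2015dynamic} for the dynamic KL expansion, I will work in the frequency domain. The approach has three steps: reduce each subject's error to an integral over frequencies, bound the integrand pointwise in $\omega$ using the marginal spectral density, and then integrate back.

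\textbf{Step 1: frequency-domain error for each subject.} Fix $i\in[p]$. Since $\{\tilde v_{kl}\}$ and $\{\tilde\phi_{kl}\}$ lie in $\mathcal{H}([0,1],\mathbb{R})$, they have $L^2$ Fourier transforms:
\[
V_k(\cdot\mid\omega)=\textstyle\sum_l \tilde v_{kl}(\cdot)e^{\mathrm{i}l\omega},\qquad \Phi_k(\cdot\mid\omega)=\textstyle\sum_l \tilde\phi_{kl}(\cdot)e^{-\mathrm{i}l\omega}.
\]
Here the normalization is absorbed into the definitions and the sign conventions are chosen to match \eqref{f to r}. The map $\varepsilon_{i\cdot}\mapsto \varepsilon_{i\cdot}-\tilde\varepsilon^K_{i\cdot}$ is then a linear time-invariant filter with frequency response
\[
A_i(\omega)=\mathrm{Id}-\sum_{k\le K}\Phi_k(\cdot\mid\omega)\otimes V_k(\cdot\mid\omega).
\]
It is the same operator $A(\omega)$ for every $i$, because the filters are shared across subjects. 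By the standard filtering identity for stationary functional time series (Brillinger-type; as in Proposition~\ref{marginal_score_sd}, whose proof computes the spectral density of filtered scores), the output has spectral density $A(\omega) f_{ii}(\omega) A(\omega)^*$. Integrating over $\omega$ gives
\[
\operatorname{E}\|\varepsilon_{ij}-\tilde\varepsilon^K_{ij}\|^2=\int_{-\pi}^{\pi}\operatorname{tr}\{A(\omega)f_{ii}(\omega)A(\omega)^*\}\,d\omega.
\]

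\textbf{Step 2: sum over subjects and bound pointwise in $\omega$.} Summing over $i$ and using the linearity of the trace together with \eqref{margin_spec}, I get
\[
\operatorname{E}\|\boldsymbol\varepsilon_j-\tilde{\boldsymbol\varepsilon}^K_j\|_p^2=p\int_{-\pi}^{\pi}\operatorname{tr}\{A(\omega)f_{\mathcal S}(\omega)A(\omega)^*\}\,d\omega.
\]
This is the key reduction: because the filters are common, the loss depends only on the marginal spectral operator. For each fixed $\omega$, the operator $\mathrm{Id}-A(\omega)$ has rank at most $K$. By the Eckart--Young/Schmidt theorem for the positive self-adjoint trace-class operator $f_{\mathcal S}(\omega)$ (Proposition~\ref{marginal_sd_property}), for any operator $B$ of rank at most $K$,
\[
\operatorname{tr}\{(I-B)f_{\mathcal S}(I-B)^*\}\ge\sum_{k>K}\eta_k(\omega).
\]
To see this, write the left side as $\|(I-B)f_{\mathcal S}^{1/2}\|_{HS}^2$; the best rank-$K$ approximation of $f_{\mathcal S}^{1/2}$ leaves error $\sum_{k>K}\eta_k$.

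\textbf{Step 3: attainment by the marginal expansion.} For the marginal dynamic KL expansion, $\phi_{kl}$ and the score filters have transforms $\psi_k(\cdot\mid\omega)\tilde\nu_k(\omega)$ and its conjugate. Hence $A(\omega)$ is exactly the orthogonal projection onto the complement of the span of $\psi_1(\cdot\mid\omega),\dots,\psi_K(\cdot\mid\omega)$. The multiplier $\tilde\nu_k$ cancels because $|\tilde\nu_k|=1$. This choice attains the pointwise bound, and integrating over $\omega$ yields the claim.

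\textbf{Main obstacle.} The hardest part is justifying Step 1 rigorously. This means checking that the infinite sums defining $\tilde\xi_{ijk}$ and $\tilde\varepsilon^K_{ij}$ converge in mean square, that the resulting operators $V_k$ and $\Phi_k$ are well defined only in an $L^2(d\omega)$ sense so that $A(\omega)$ exists for almost every $\omega$, and that the spectral-density identity holds. I would handle this by truncating to $|l|\le L$, where everything is a finite sum and the filtering identity is a direct computation from \eqref{r to f}. I would then pass $L\to\infty$ using the $\mathcal{H}$ convergence together with the summability condition, which makes $\sup_\omega\|f_{ii}(\omega)\|$ bounded, so the truncated errors converge. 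One should also note that the realness of the filters is not needed for the lower bound, and that it is satisfied by $\phi_{kl}$ because $\tilde\nu_k\in\mathcal{M}$ and $\psi_k(\cdot\mid-\omega)$ can be chosen as $\overline{\psi_k(\cdot\mid\omega)}$.
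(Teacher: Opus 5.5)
Your proposal is correct and is essentially the paper's argument. Both proofs express the error in the frequency domain and use the fact that the filters are shared across subjects to collapse the sum over $i$ into $p\int_{-\pi}^{\pi}\operatorname{tr}\{A(\omega)f_{\mathcal S}(\omega)A(\omega)^{*}\}\,\mathrm{d}\omega$ (the paper writes this as a Hilbert--Schmidt norm of $(\mathcal{F}_{\mathcal{S}}^{\omega})^{1/2}-(\mathcal{F}_{\mathcal{S}}^{\omega})^{1/2}\circ\mathcal{G}_{K}^{\omega}$), then minimize pointwise in $\omega$ over rank-$K$ operators. The differences are cosmetic: the paper reaches the frequency-domain form by expanding the mean squared error into three terms rather than invoking the filtering identity, and it cites \citet{tan2024graphical} for the pointwise minimization that you justify directly via Eckart--Young.
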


\section{Estimation}\label{section_estimation}

We consider a latent MFTS with additive measurement errors, and the discrete observations satisfy
\begin{align}
    Y_{ijz} = X_{ij}(t_{ijz}) + \tau_{ijz} = \mu_i(t_{ijz}) + \varepsilon_{ij}(t_{ijz}) + \tau_{ijz}, \quad i \in [p], \ j \in [J], \ z \in [N_{ij}],
    \label{discrete_obs} 
\end{align}
where $\{t_{ijz} \in [0,1]; z \in [N_{ij}] \}$ denote the observation times for $X_{ij}$, $N_{ij}$ are measurement counts, $\tau_{ijz} \sim \mathcal{N}(0, \sigma_i^2)$ are i.i.d.\ Gaussian noise with variance $\sigma_i^2$, and $\mu_i(\cdot)$ denote the mean functions. Here, $t_{ijz}$ and $N_{ij}$ may vary across $i$ and $j$, reflecting various missing patterns in real-world data. The tasks for these irregular data usually include missing-data imputation to recover latent trajectories and prediction/forecasting to infer future functional observations from historical data. To this end, we estimate the mean functions $\mu_i(\cdot)$ and approximate the zero-mean process $\boldsymbol{\varepsilon}_j$ with the finite truncation of the marginal dynamic KL expansion \eqref{marginal_dfpca} as
\begin{align}
    \boldsymbol{\varepsilon}_j(\cdot) \approx \sum_{k \in [K]} \sum_{|l| \leq  L_k} \phi_{kl}(\cdot) \boldsymbol{\xi}_{\cdot(j+l)k},
    \label{truncate_marginal_dfpca}
\end{align}
where $K$ and $L_k$ are truncation parameters. In the following, we describe the estimation of the marginal functional filters $\phi_{kl}(\cdot)$ and the extraction of the corresponding scores $\boldsymbol{\xi}_{\cdot(j+l)k}$, which form the basis for downstream imputation and forecasting.

\subsection{Estimation of Marginal Functional Filters}\label{section_marginff_est}

For each $i \in [p]$, we adopt local linear smoothers to estimate the mean functions $\mu_i(\cdot)$ from the discrete observations $Y_{ijz}$, and then estimate the autocovariance functions $C_{iih}(\cdot, \cdot)$ based on the products 
\(
\{ Y_{i(j+h)z_{1}} - {\mu}_{i}(t_{i(j+h)z_{1}}) \} \{ Y_{ijz_{2}} - {\mu}_{i}(t_{ijz_{2}}) \}
\) for $j$ such that $j \in [J]$ and $j+h \in [J]$, and $z_1 \in [N_{i(j+h)}]$ and $z_2 \in [N_{ij}]$, in a similar manner; refer to \citet{yao2005functional, rubin2020sparsely, guo2024unified} for detailed procedures. 
Local linear smoothing is widely adopted in functional data analysis, and many established statistical software packages provide implementations \citep{jacoby2000loess, cleveland2017local}. 
In our implementation, we use the \texttt{loess} function in R and denote the estimators of ${\mu}_{i}(\cdot)$ and ${C}_{iih}(\cdot, \cdot)$ by $\hat{\mu}_{i}(\cdot)$ and $\hat{C}_{iih}(\cdot, \cdot)$, respectively.

To estimate the marginal functional filters $\phi_{kl}(\cdot)$ in \eqref{marginal_opt_ff}, we first estimate the individual spectral densities using the Bartlett lag-window estimators \citep{brillinger2001time} for each $i \in [p]$ as
\begin{align}
    \hat{f}_{ii}(t,s \mid \omega) = \frac{1}{2\pi} \sum_{|h| \leq h_{\text{max}}} \bigg(1 - \frac{|h|}{h_{\text{max}}} \bigg) \hat{C}_{iih}(t, s) \exp(\mathrm{i} h\omega), \quad  t,s \in [0,1], \ \omega \in [-\pi,\pi],
    \label{individual_spec_est}
\end{align}
where $h_{\text{max}}$ denotes the truncation parameter. Next, we estimate the marginal spectral density $f_{\mathcal{S}}(\cdot,\cdot \mid \omega)$ in \eqref{margin_spec} via the plug-in estimator
\begin{align}
    \hat{f}_{\mathcal{S}}(t,s \mid \omega) = \frac{1}{p} \sum_{i \in [p]} \hat{f}_{ii}(t,s \mid \omega), \quad t,s \in [0,1], \ \omega \in [-\pi,\pi].
    \label{margin_spec_est}
\end{align}
Then, we perform a spectral decomposition of $\hat{f}_{\mathcal{S}}(\cdot,\cdot \mid \omega)$ to obtain eigenfunctions $\hat{\psi}_k(\cdot \mid \omega)$ for $k \in [K]$. In practice, for each $\omega \in [-\pi,\pi]$, we discretize the functional domain $[0,1]$ on a dense grid $\{t_m\}_{m=1}^{M_t}$ and form the $M_t \times M_t$ matrix $\big\{\hat{f}_{\mathcal{S}}(t_m,t_{m'} \mid \omega)\big\}_{m,m'=1}^{M_t}$. We then compute its eigendecomposition to obtain numerical approximations of the eigenfunctions. Repeating this step over a grid of $M_\omega$ frequency points yields an overall computational complexity of $\mathcal{O}(M_\omega M_t^3)$, where $M_t$ and $M_\omega$ are the numbers of grid points used to discretize $[0,1]$ and $[-\pi,\pi]$, respectively.

To solve the constrained optimization problem in \eqref{marginal_opt_ff_def2}, we replace $\Psi_k(\cdot, \cdot)$ with its estimator $\hat{\Psi}_k(\cdot, \cdot)$, where $\hat{\Psi}_k(\omega_1, \omega_2) = \int_0^1 \overline{\hat{\psi}_k(t \mid \omega_1)} \hat{\psi}_k(t \mid \omega_2)\,\mathrm{d} t$ for $\omega_1, \omega_2 \in [-\pi, \pi]$, and implement the projected gradient algorithm in \citet{guo2024unified} to obtain multiplicative factors $\hat{\nu}_k(\cdot)$. Finally, we construct the marginal functional filters from the estimated multiplicative factor $\hat{\nu}_k(\cdot)$ and the eigenfunctions $\hat{\psi}_k(\cdot \mid \omega)$ as
\begin{align}
    \hat{\phi}_{kl}(t)= \frac{1}{2\pi} \int_{-\pi}^{\pi} \hat{\psi}_k(t \mid \omega)\hat{\nu}_k(\omega) \exp (-\mathrm{i}l\omega)\,\mathrm{d} \omega, \quad t \in [0,1], \ l \in \mathbb{Z}.
    \label{marginal_opt_ff_est}
\end{align}

\subsection{Score Extraction and Downstream Tasks}\label{section_score_est}
We adopt a Bayesian approach to extract the latent scores and facilitate downstream tasks in this section. Let $\boldsymbol{Y} := \{Y_{ijz}; i \in [p], j \in [J], z \in [N_{ij}]\}$ represent the discrete observations. For $k \in [K]$, denote $\boldsymbol{\xi}_{\cdot \cdot k} := \{\boldsymbol{\xi}_{\cdot (1-L_k)k}, \ldots, \boldsymbol{\xi}_{\cdot (J+L_k)k} \}$ as the scores in the truncated model \eqref{truncate_marginal_dfpca}. By Bayes' theorem, the posterior distribution of $\{\boldsymbol{\xi}_{\cdot \cdot k}; k \in [K]\}$ can be represented as
\begin{equation}
\pi\left(\boldsymbol{\xi}_{\cdot \cdot 1}, \ldots, \boldsymbol{\xi}_{\cdot \cdot K} \mid \boldsymbol{Y}\right)
\propto \mathcal{L}(\boldsymbol{\xi}_{\cdot \cdot 1}, \ldots, \boldsymbol{\xi}_{\cdot \cdot K} \mid \boldsymbol{Y}) \cdot \prod_{k \in [K]} \pi(\boldsymbol{\xi}_{\cdot \cdot k}),
\label{posterior}
\end{equation}
where $\pi(\cdot)$ denotes the prior distribution. Under the Gaussian assumption, we have
\begin{align*}
&\mathcal{L}(\boldsymbol{\xi}_{\cdot \cdot 1}, \ldots, \boldsymbol{\xi}_{\cdot \cdot K} \mid \boldsymbol{Y}) \\
&\propto 
\exp \left[
- \sum_{i \in [p]} \sum_{j \in [J]} \sum_{z \in [N_{ij}]}
\frac{
\left\{Y_{ijz} - \mu_i(t_{ijz}) - \sum_{k \in [K]} \sum_{|l| \leq L_k} \phi_{kl}(t_{ijz}) \, \xi_{i(j+l)k}\right\}^2
}{2\sigma_i^2}
\right],
\end{align*}
where $\mu_i(\cdot)$ and $\phi_{kl}(\cdot)$ are replaced with their estimates $\hat{\mu}_i(\cdot)$ and $\hat{\phi}_{kl}(\cdot)$, respectively, and $\sigma_i^2$ can be estimated from the observed data $\boldsymbol{Y}$; see the approach in \citet{yao2005functional} for more details.  

Notice that the scores $\boldsymbol{\xi}_{i \cdot k} =  \{{\xi}_{i jk}; j \in \mathbb{Z}\}$ in \eqref{marginal_dfpca} form a weakly stationary time series with spectral density $\tilde{\eta}_{ik}(\cdot)$ defined in \eqref{margin_score_spec}; we employ the Whittle likelihood \citep{whittle1961gaussian, subba2021reconciling} to construct prior distributions $\pi(\boldsymbol{\xi}_{\cdot \cdot k})$ in the frequency domain. Specifically, we define the discrete Fourier transform of $\boldsymbol{\xi}_{\cdot \cdot k}$ as $\tilde{\boldsymbol{\xi}}_{\cdot k}(\omega_j) := \frac{1}{\sqrt{2\pi (J+2L_k)}} \sum_{j \in [J+2L_k]} \exp( \mathrm{i} j\omega_j) \boldsymbol{\xi}_{\cdot (j-L_k) k}$ for $ \omega_j \in \mathcal{S}_J$,
where $\mathcal{S}_J := \{\omega_j = 2\pi j/J : j \in [J]\}$. Following the asymptotic properties of $\tilde{\boldsymbol{\xi}}_{\cdot k}(\cdot)$ \citep{whittle1961gaussian, subba2021reconciling}, the log-transformed prior distribution takes the form:
\begin{equation*}
\log \pi(\boldsymbol{\xi}_{\cdot \cdot k}) = -\frac{1}{2} \sum_{j \in [J]} \left[
\left\{ \tilde{\boldsymbol{\xi}}_{\cdot k}(\omega_j) \right\}^* \boldsymbol{\Phi}_k(\omega_j) \tilde{\boldsymbol{\xi}}_{\cdot k}(\omega_j)
- \log\det\left\{ \boldsymbol{\Phi}_k(\omega_j) \right\}
\right], \quad \omega_j \in \mathcal{S}_J.
\end{equation*}
Here, $\boldsymbol{\Phi}_k(\omega_j)$ is a diagonal matrix with elements $\{\tilde{\eta}^{-1}_{1k}(\omega_j), \ldots, \tilde{\eta}^{-1}_{pk}(\omega_j)\}$, which can be estimated directly from $\hat{\psi}_k(\cdot \mid \omega)$ and $\hat{f}_{ii}(\cdot,\cdot \mid  \omega)$ via \eqref{margin_score_spec}.

Given truncation orders $K$ and $\{L_k\}_{k \in [K]}$, the log-likelihood $\log \mathcal{L}(\boldsymbol{\xi}_{\cdot \cdot 1}, \ldots, \boldsymbol{\xi}_{\cdot \cdot K} \mid \boldsymbol{Y})$ in \eqref{posterior} can be written in matrix form as
\begin{align}
    - \frac{1}{2}\,\big\|\bm{W}^{1/2}\big(\bm{y}-\bm{A}\bm{\xi}\big)\big\|_{\text{E}}^2 + C,
    \label{mat_log_likelihood}
\end{align}
where $\| \cdot \|_{\text{E}}$ is the Euclidean norm for vectors, and $C$ is a constant that is independent of $(\boldsymbol{\xi}_{\cdot \cdot 1}, \ldots, \boldsymbol{\xi}_{\cdot \cdot K})$. Here, $\bm{y}$ denotes the vectorized demeaned observations $\{\tilde{Y}_{ijz} = Y_{ijz} - \mu_i(t_{ijz}); i \in [p], j \in [J], z \in [N_{ij}]\}$, and $\bm{W}$ denotes the diagonal matrix with weights $\sigma_i^{-2}$ aligned with $\bm{y}$. Moreover, $\bm{\xi}$ denotes the vectorized scores $\{\boldsymbol{\xi}_{\cdot \cdot k}; k \in [K]\}$, and $\bm{A}$ is the design matrix consisting of functional filters $\phi_{kl}(\cdot)$ such that each element of $\bm{A}\bm{\xi}$ takes the form $\sum_{k \in [K]} \sum_{|l| \leq L_k} \phi_{kl}(t_{ijz}) \, \xi_{i(j+l)k}$. 
Similarly, the joint log-prior can be written in matrix form as
\begin{align}
    \sum_{k \in [K]} \log \pi(\boldsymbol{\xi}_{\cdot \cdot k}) = - \frac{1}{2}\,\bm{\xi}^{*}\bm{Q}\,\bm{\xi} + C,
    \label{mat_log_prior}
\end{align}
where 
\begin{align*}
    \bm{Q}=\mathrm{diag}(\bm{Q}_1,\ldots,\bm{Q}_K), \quad \text{with} \ \bm{Q}_k=(\bm{F}_k\!\otimes\! \bm{I}_p)^{*}\!\left(\bigoplus_{\omega_j\in\mathcal{S}_J}\boldsymbol{\Phi}_k(\omega_j)\right)\!(\bm{F}_k\!\otimes\! \bm{I}_p).
\end{align*}
Here, $\bm{I}_p$ denotes the $p \times p$ identity matrix, and $\bm{F}_k$ denotes the discrete Fourier transform matrix with entries $[\bm{F}_k]_{j,r} = \frac{1}{\sqrt{2\pi (J+2L_k)}} \exp(\mathrm{i} r \omega_j)$, where $[\cdot]_{j,r}$ denotes the $(j,r)$-th entry of a matrix, $\otimes$ denotes the Kronecker product, and $\bigoplus_{\omega_j\in\mathcal{S}_J}\boldsymbol{\Phi}_k(\omega_j)$ denotes the block-diagonal (direct-sum) matrix with diagonal blocks $\{\boldsymbol{\Phi}_k(\omega_j): \omega_j \in \mathcal{S}_J\}$. Given these matrix-vector expressions, we apply a gradient ascent algorithm to obtain the Maximum A Posteriori (MAP) estimator for the scores $\{\boldsymbol{\xi}_{\cdot \cdot k}; k \in [K]\}$ from \eqref{posterior}, denoting it by $(\hat{\boldsymbol{\xi}}_{\cdot \cdot 1}, \ldots, \hat{\boldsymbol{\xi}}_{\cdot \cdot K})$.

A complete algorithm for the above procedure is summarized in Algorithm~\ref{alg:mfts_estimation}, and we provide a detailed explanation and an illustrative example for the optimization \eqref{mat_MAP} in \partref{mat_posterior} of the Appendix. 
Here, the selection of truncation numbers $K$, $\{L_k\}_{k=1}^K$, and the maximum covariance lag $h_{\max}$ is discussed in \partref{parameter_selection} of the Appendix, and additional technical details for the score extraction procedure can be found in \partref{MAP_details} of the Appendix.  
In each iterative step of this optimization, the computational complexity is $\mathcal{O}(K p J^2)$.

\begin{algorithm}[!ht]
\caption{Estimation of marginal functional filters and scores.}
\label{alg:mfts_estimation}
\begin{adjustwidth}{-1.5em}{0em}
\begin{algorithmic}[1]
\Statex \textbf{Input:} Discrete observations $\boldsymbol{Y}=\{(t_{ijz}, Y_{ijz})\}$; truncation orders $K$ and $\{L_k\}_{k \in [K]}$; 
\Statex maximum covariance lag $h_{\max}$.
\Statex \textbf{Output:} Marginal functional filters $\{\hat{\phi}_{kl}(t)\!:\!|l|\le L_k,\,k\in[K]\}$ and scores $\{\hat{\boldsymbol{\xi}}_{\cdot\cdot k}\}_{k \in [K]}$.

\State \textbf{Mean removal}: For each component $i$, smooth $\{(t_{ijz},Y_{ijz})\}$ to obtain $\hat{\mu}_i(\cdot)$ and set $\tilde{Y}_{ijz}\gets Y_{ijz}-\hat{\mu}_i(t_{ijz})$.

\State \textbf{Autocovariances}: For lags $|h|\le h_{\max}$, estimate the covariance surface $\hat{C}_{iih}(t,s)$ from
\Statex products of mean-removed observations at time pairs $(t,s)$.

\State \textbf{Per-component spectral densities}: For $\omega\in[-\pi,\pi]$, 
\begin{align*}
    \hat{f}_{ii}(t,s\mid\omega)=\frac{1}{2\pi}\sum_{|h|\le h_{\max}}\!\Big(1-\frac{|h|}{h_{\max}}\Big)\hat{C}_{iih}(t,s)\, \exp{(\mathrm{i}h\omega)}.
\end{align*}

\State \textbf{Marginal spectral density}: $\displaystyle \hat{f}_{\mathcal S}(t,s\mid\omega)=\frac{1}{p}\sum_{i \in [p]}\hat{f}_{ii}(t,s\mid\omega).$

\State \textbf{Frequency-wise eigensystem}: For each $\omega$, eigendecompose $\hat{f}_{\mathcal S}(\cdot,\cdot\mid\omega)$ to obtain $\{\hat{\psi}_k(\cdot\mid\omega)\}_{k \in [K]}$.

\State \textbf{Filter construction}: For each $k$, solve for an optimal scalar factor $\hat{\nu}_k(\omega)$ from \eqref{marginal_opt_ff_def2} by 
\Statex replacing ${\Psi}_k(\omega_1, \omega_2)$ with $\int_0^1 \overline{\hat{\psi}_k(t \mid \omega_1)} \hat{\psi}_k(t \mid \omega_2)\,\mathrm{d} t$. Then set
\begin{align*}
    \displaystyle \hat{\phi}_{kl}(t)=\frac{1}{2\pi}\int_{-\pi}^{\pi}\hat{\psi}_k(t\mid\omega)\,\hat{\nu}_k(\omega)\,\exp{(-\mathrm{i}l\omega)}\,\mathrm{d}\omega,\quad |l|\le L_k.
\end{align*}

\State \textbf{Variance}: Estimate $\hat{\sigma}_i^2$ from residuals $\tilde{Y}_{ijz}$ using local variance smoothing \citep{yao2005functional}.

\State \textbf{Prior precision matrix construction}: Construct the prior precision matrix $\bm{Q}$ by
\Statex replacing $\{\boldsymbol{\Phi}_k(\cdot); k \in [K]\}$ with their estimates.

\State \textbf{Score extraction}: The scores are obtained by solving
\begin{align}
\hat{\bm\xi}\;=\;\arg\min_{\bm\xi}\ \frac{1}{2}\,\big\|\bm{W}^{1/2}\big(\bm{y}-\bm{A\xi}\big)\big\|_{\text{E}}^2\;+\;\frac{1}{2}\,\bm{\xi}^{*}\bm Q\,\bm \xi.
    \label{mat_MAP}
\end{align}
\end{algorithmic}
\end{adjustwidth}
\end{algorithm}

The data imputation and forecasting tasks can be conducted using the proposed estimators. For imputation, we reconstruct the entire curves $\boldsymbol{X}_j$ from incomplete observations $\boldsymbol{Y}$ as
\begin{align*}
    \hat{\boldsymbol{X}}_j(t) = \hat{\boldsymbol{\mu}}(t) + \sum_{k \in [K]} \sum_{|l| \leq L_k} \hat{\phi}_{kl}(t) \hat{\boldsymbol{\xi}}_{\cdot(j+l)k}, \quad t \in [0,1], \ j \in [J],
\end{align*}
where $\hat{\boldsymbol{\mu}}(\cdot) = (\hat{\mu}_1(\cdot), \ldots,\hat{\mu}_p(\cdot))^{\top}$ is the vector of estimated mean functions. For forecasting, we fit multivariate time series models, such as vector autoregression (VAR) or vector ARMA (VARMA) \citep{brillinger2001time, brockwell2009time}, to the estimated score processes $\{\hat{\boldsymbol{\xi}}_{\cdot \cdot k}\}_{k \in [K]}$. Given the predicted scores $\{\hat{\boldsymbol{\xi}}_{\cdot (J+L_k+1) k}, \ldots, \hat{\boldsymbol{\xi}}_{\cdot (J+L_k+P) k}\}$, the $m$-step-ahead forecast curves are
\begin{align*}
        \hat{\boldsymbol{X}}_{J+m}(t) = \hat{\boldsymbol{\mu}}(t) + \sum_{k \in [K]} \sum_{|l| \leq L_k} \hat{\phi}_{kl}(t) \hat{\boldsymbol{\xi}}_{\cdot(J+m+l)k}, \quad t \in [0,1], \ m \in [P],
\end{align*}
where $P$ denotes the forecast horizon.

\section{Simulation}\label{section_simulation}

In this section, we compare our proposed method with common MFTS analysis approaches for data imputation and forecasting. We generate $p$-dimensional zero-mean MFTS through the dynamic KL expansions
\begin{align}
    \varepsilon_{ij}(t) = \sum_{k \in [K]} \sum_{|l| \leq L_{k}} w_l \phi_{ikl}(t) {\xi}_{i (j+l)k}, \quad j \in [J], \ t \in [0,1],
    \label{sim_gen_MFTS}
\end{align}
where $K$ and $L_{k}$ are positive integers, and $J$ denotes the sample length. The weights $w_{l}$ are defined as $w_{l}=\sqrt{ w_{l}^{\prime} / \sum_{|l| \leq L_{k}} w_{l}^{\prime}}$ with $w_{l}^{\prime}=\exp (-|l|/2)$ for each $k$, ensuring $\sum_{|l| \leq L_{k}} w_l^2=1$ and assigning decreasing weights to $\phi_{ikl}(\cdot)$ as $|l|$ increases. For each $k$, define the $p$-dimensional score vector $\boldsymbol{\xi}_{\cdot jk}:=(\xi_{1jk},\ldots,\xi_{pjk})^{\top}$ for $j \in \mathbb{Z}$. The basis functions $\phi_{ikl}(\cdot)$ are constructed using Fourier bases $\tilde\phi_{kl}(\cdot)$ multiplied by a subject-specific fluctuation $1 + \sin (it/p)$ for each $i$. Serial dependencies are incorporated via a VAR(1) model for the scores in each component:
$\boldsymbol{\xi}_{\cdot (j+1)k} = \rho_k \boldsymbol{\xi}_{\cdot jk} + \boldsymbol{b}_{\cdot jk}$, where $\boldsymbol{b}_{\cdot jk}$ are independent across both $j$ and $k$. To introduce cross-individual dependence structures, we generate the Gaussian random variables via $\boldsymbol{b}_{\cdot jk} \sim \mathcal{N}(\boldsymbol{0}, (\boldsymbol{\Theta}_k^b)^{-1})$ \citep{zapata2022partial, tan2024graphical}, where the details of the inverse covariance matrix $\boldsymbol{\Theta}_k^b$ are provided in \partref{score_gen} of the Appendix. In our simulation setup, we set $p = 5$, $K = L_k = 1$, and $\rho_k = 0.5$.

We generate discrete MFTS observations $Y_{ijz}$ via
\begin{align}
    Y_{ijz} =  \varepsilon_{ij}(t_{ijz}) + \tau_{ijz}, \quad i \in [p], \ j \in [J], \ z \in [N_{ij}].
    \label{sim_gen_discrete}
\end{align}
Specifically, we construct an evenly spaced grid on $[0,1]$ with 31 candidate observation points. The observation count $N_{ij}$ is sampled from discrete uniform distributions over $\{4,5\}$, $\{5,\ldots,10\}$, and $\{10,\ldots,15\}$, corresponding to three different missingness levels. Given $N_{ij}$, the observation times $t_{ijz}$ are then randomly selected without replacement from the 31 candidate points. To introduce measurement error, we generate $\tau_{ijz}$ from $\mathcal{N}(0,\operatorname{E}\|\varepsilon_{i1}\|^2/10)$ independently across $i$, $j$, and $z$. In the following simulations, we consider sample lengths $J = 30, 60, 90$.

The above setting is referred to as Case~1. In addition, we conduct two further simulation scenarios considering heavy-tailed measurement errors and non-linear score dynamics. Specifically, for Case~2, we replace the Gaussian errors $\tau_{ijz}$ in \eqref{sim_gen_discrete} with rescaled $t$-distributed errors $c_i \cdot \tilde{\tau}_{ijz}$, where $\tilde{\tau}_{ijz}$ are sampled independently from the $t_{\nu}$ distribution with $\nu = 5$ and
$c_i = \sqrt{\frac{(\nu - 2)}{\nu} \cdot \frac{\operatorname{E}\|\varepsilon_{i1}\|^2}{10}}$, 
so that the resulting variances are comparable to those under the Gaussian errors. For Case~3, we generate the scores via the non-linear mechanism
$\boldsymbol{\xi}_{\cdot (j+1)k} = \rho_k \boldsymbol{\xi}_{\cdot jk} + \sin(\boldsymbol{\xi}_{\cdot jk}) + \boldsymbol{b}_{\cdot jk}$.
The remaining data generation settings are the same as in Case~1.

We compare four competing methods with our proposed method for data imputation and forecasting. The first two approaches, dynamic FPCA (DFPCA) \citep{hormann2015dynamic} and Principal Analysis via Dependency-Adaptivity (PADA) \citep{guo2024unified}, adopt frequency-domain representations for each subject in MFTS, achieving theoretical optimality under mild conditions. Both methods are built upon individual spectral density functions $f_{ii}(\cdot,\cdot\mid\omega)$, and PADA further selects optimal functional filters according to specified criteria. 
The third and fourth methods both operate in the time domain, namely Long-Run FPCA (LRFPCA) \citep{gao2019high} and Variational Multivariate FPCA (VMFPCA) \citep{chang2024modelling}. These approaches construct subject-specific dimension-reduction representations and then integrate information across subjects when modeling the resulting scores for MFTS prediction.
 
Notably, both PADA and Spectral MPCA accommodate discrete observations directly. In comparison, the other methods (DFPCA, LRFPCA, and VMFPCA) typically require pre-smoothing to recover complete curves. To accomplish MFTS forecasting, we fit AR models to the scores produced by DFPCA and PADA as in \citet{guo2024unified}. In contrast, scores from the remaining approaches are modeled using VAR models to capture cross-series dependencies. Lag orders for all time series models (AR and VAR) are selected by the Akaike information criterion (AIC) \citep{brockwell2009time}. Finally, forecasts of the MFTS are reconstructed from the predicted scores using the corresponding dimension-reduction representations for each method.

We assess data imputation performance using the normalized mean squared error (NMSE)
\begin{align*}
    \text{NMSE} 
        = \frac{\sum_{i \in [p]}\sum_{j \in [J]} \left\| \varepsilon_{ij} - \hat{\varepsilon}_{ij} \right\|^{2}}
               {\sum_{i \in [p]}\sum_{j \in [J]} \left\| \varepsilon_{ij} \right\|^{2}},
\end{align*}
where $\hat{\varepsilon}_{ij}$ is obtained from the competing methods, and the number of retained components $K$ is fixed at its true value in~\eqref{sim_gen_MFTS}. 
Forecasting performance is evaluated with the one-step-ahead criterion \citep{aue2015prediction, tang2022clustering}. Specifically, we generate $\{\boldsymbol{\varepsilon}_j; j \in [J+P]\}$ and discrete observations $\{Y_{ijz}; j \in [J + P], z \in [N_{ij}]\}$ from \eqref{sim_gen_MFTS} and~\eqref{sim_gen_discrete}, and then define the resulting normalized mean squared prediction error (NMSPE) as
\begin{align*}
    \text{NMSPE} 
        = \frac{\sum_{i \in [p]}\sum_{m \in [P]}
                \left\| \varepsilon_{i(J+m)} 
                          - \hat{\varepsilon}_{i\{(J+m)\mid 1:(J+m-1)\}} \right\|^{2}}
               {\sum_{i \in [p]}\sum_{m \in [P]} \left\| \varepsilon_{i(J+m)} \right\|^{2}},
\end{align*}
where $\hat{\varepsilon}_{i\{(J+m)\mid 1:(J+m-1)\}}$ denotes the one-step-ahead forecast based on observations $\{Y_{ijz}; j \in [J+m-1], \; z \in [N_{ij}]\}$. In this simulation, we set $P=5$.

We conduct 100 Monte Carlo simulations for each experimental setting. Table~\ref{tab:combined_cases_shared_rows} reports NMSEs for data imputation in the three cases. We make three observations. First, methods that do not rely on the pre-smoothing strategy, namely PADA and Spectral MPCA, outperform competing approaches that use pre-smoothing, reflecting the advantage of operating directly on discrete observations rather than on reconstructed trajectories. Second, Spectral MPCA yields smaller NMSEs than the time-domain approaches (LRFPCA and VMFPCA), even in the dense case ($N_{ij} \in \{10,\ldots,15\}$). This is because the representations used by time-domain approaches are generally not $L^2$-optimal for MFTS, consistent with the theoretical optimality of frequency-domain representations (Theorem~\ref{opt_approximation}). Third, Spectral MPCA imputes more accurately than methods relying on subject-specific representations (DFPCA and PADA), because its shared functional filters borrow strength across subjects and improve statistical efficiency. Taken together, these results demonstrate Spectral MPCA’s superior performance in data imputation: it unifies frequency-domain optimality and cross-subject information integration, fully exploiting multi-level dependence information while avoiding information loss induced by pre-smoothing.

\begin{sidewaystable}
\centering
\caption{The NMSEs of data imputation under three cases.}
\label{tab:combined_cases_shared_rows}
\resizebox{0.98\textwidth}{!}{\begin{tabular}{ccccc|ccc|ccc}
\toprule
\multicolumn{2}{c}{} &
\multicolumn{3}{c|}{\textbf{Case 1}} &
\multicolumn{3}{c|}{\textbf{Case 2}} &
\multicolumn{3}{c}{\textbf{Case 3}} \\
\cmidrule(lr){3-5}\cmidrule(lr){6-8}\cmidrule(lr){9-11}
\multicolumn{2}{c}{} &
\multicolumn{1}{c}{$N_{ij}\in\{4,5\}$} &
\multicolumn{1}{c}{$N_{ij}\in\{5,\ldots,10\}$} &
\multicolumn{1}{c|}{$N_{ij}\in\{10,\ldots,15\}$} &
\multicolumn{1}{c}{$N_{ij}\in\{4,5\}$} &
\multicolumn{1}{c}{$N_{ij}\in\{5,\ldots,10\}$} &
\multicolumn{1}{c|}{$N_{ij}\in\{10,\ldots,15\}$} &
\multicolumn{1}{c}{$N_{ij}\in\{4,5\}$} &
\multicolumn{1}{c}{$N_{ij}\in\{5,\ldots,10\}$} &
\multicolumn{1}{c}{$N_{ij}\in\{10,\ldots,15\}$} \\
\midrule

\multirow{6}{*}{\textbf{$\boldsymbol{J=30}$}}
& DFPCA         & 0.926 & 0.593 & 0.323 & 0.927 & 0.577 & 0.334 & 0.913 & 0.587 & 0.315 \\
& PADA          & 0.308 & 0.190 & 0.107 & 0.306 & 0.188 & 0.107 & 0.298 & 0.187 & 0.105 \\
& LRFPCA        & 0.905 & 0.794 & 0.728 & 0.906 & 0.791 & 0.741 & 0.894 & 0.785 & 0.724 \\
& VMFPCA        & 0.983 & 0.737 & 0.638 & 0.981 & 0.736 & 0.657 & 0.971 & 0.728 & 0.627 \\
& Spectral MPCA & 0.172 & 0.117 & 0.069 & 0.166 & 0.116 & 0.069 & 0.172 & 0.116 & 0.069 \\
\midrule

\multirow{6}{*}{\textbf{$\boldsymbol{J=60}$}}
& DFPCA         & 0.867 & 0.528 & 0.229 & 0.860 & 0.527 & 0.223 & 0.868 & 0.522 & 0.228 \\
& PADA          & 0.205 & 0.124 & 0.074 & 0.205 & 0.125 & 0.073 & 0.203 & 0.124 & 0.073 \\
& LRFPCA        & 0.907 & 0.820 & 0.762 & 0.903 & 0.818 & 0.763 & 0.908 & 0.817 & 0.762 \\
& VMFPCA        & 0.900 & 0.690 & 0.580 & 0.891 & 0.685 & 0.578 & 0.893 & 0.685 & 0.574 \\
& Spectral MPCA & 0.117 & 0.082 & 0.055 & 0.118 & 0.083 & 0.055 & 0.117 & 0.083 & 0.056 \\
\midrule

\multirow{6}{*}{\textbf{$\boldsymbol{J=90}$}}
& DFPCA         & 0.859 & 0.488 & 0.203 & 0.851 & 0.487 & 0.218 & 0.853 & 0.484 & 0.203 \\
& PADA          & 0.167 & 0.097 & 0.063 & 0.168 & 0.096 & 0.063 & 0.165 & 0.096 & 0.064 \\
& LRFPCA        & 0.908 & 0.824 & 0.782 & 0.910 & 0.824 & 0.782 & 0.910 & 0.824 & 0.782 \\
& VMFPCA        & 0.866 & 0.674 & 0.568 & 0.868 & 0.674 & 0.575 & 0.852 & 0.668 & 0.558 \\
& Spectral MPCA & 0.105 & 0.066 & 0.052 & 0.105 & 0.066 & 0.052 & 0.105 & 0.067 & 0.053 \\
\bottomrule
\end{tabular}
}
\end{sidewaystable}

Table~\ref{tab:NMSPE_three_cases} reports NMSPEs for evaluating forecasting performance. The proposed Spectral MPCA consistently achieves lower prediction errors across all simulation scenarios, demonstrating its practical applicability and reliability under heavy-tailed noise and both linear and non-linear dynamics in MFTS data.

\begin{sidewaystable}
\centering
\caption{The NMSPEs of MFTS forecasting under three cases.}
\label{tab:NMSPE_three_cases}
\resizebox{0.98\textwidth}{!}{
\begin{tabular}{ccccc|ccc|ccc}
\toprule
\multicolumn{2}{c}{} &
\multicolumn{3}{c|}{\textbf{Case 1}} &
\multicolumn{3}{c|}{\textbf{Case 2}} &
\multicolumn{3}{c}{\textbf{Case 3}} \\
\cmidrule(lr){3-5}\cmidrule(lr){6-8}\cmidrule(lr){9-11}
\multicolumn{2}{c}{} &
\multicolumn{1}{c}{$N_{ij}\in\{4,5\}$} &
\multicolumn{1}{c}{$N_{ij}\in\{5,\ldots,10\}$} &
\multicolumn{1}{c|}{$N_{ij}\in\{10,\ldots,15\}$} &
\multicolumn{1}{c}{$N_{ij}\in\{4,5\}$} &
\multicolumn{1}{c}{$N_{ij}\in\{5,\ldots,10\}$} &
\multicolumn{1}{c|}{$N_{ij}\in\{10,\ldots,15\}$} &
\multicolumn{1}{c}{$N_{ij}\in\{4,5\}$} &
\multicolumn{1}{c}{$N_{ij}\in\{5,\ldots,10\}$} &
\multicolumn{1}{c}{$N_{ij}\in\{10,\ldots,15\}$} \\
\midrule

\multirow{6}{*}{\textbf{$\boldsymbol{J=30}$}}
& DFPCA         & 1.020 & 0.995 & 0.924 & 1.031 & 0.987 & 0.913 & 1.029 & 0.987 & 0.910 \\
& PADA          & 1.371 & 0.837 & 0.521 & 1.188 & 0.845 & 0.537 & 1.279 & 0.817 & 0.494 \\
& LRFPCA        & 1.044 & 0.984 & 0.922 & 1.034 & 0.991 & 0.912 & 1.037 & 0.983 & 0.894 \\
& VMFPCA        & 1.260 & 0.903 & 0.734 & 1.264 & 0.944 & 0.691 & 1.237 & 0.884 & 1.074 \\
& Spectral MPCA & 0.803 & 0.629 & 0.435 & 0.729 & 0.638 & 0.445 & 0.785 & 0.625 & 0.437 \\
\midrule

\multirow{6}{*}{\textbf{$\boldsymbol{J=60}$}}
& DFPCA         & 1.029 & 0.965 & 0.885 & 1.031 & 0.965 & 0.888 & 1.019 & 0.951 & 0.878 \\
& PADA          & 0.843 & 0.552 & 0.416 & 0.848 & 0.539 & 0.408 & 0.829 & 0.541 & 0.405 \\
& LRFPCA        & 0.995 & 0.970 & 0.896 & 0.985 & 0.968 & 0.893 & 0.999 & 0.949 & 0.871 \\
& VMFPCA        & 0.937 & 0.818 & 0.642 & 0.939 & 0.818 & 0.657 & 0.927 & 0.812 & 0.630 \\
& Spectral MPCA & 0.614 & 0.467 & 0.387 & 0.638 & 0.473 & 0.382 & 0.610 & 0.461 & 0.374 \\
\midrule

\multirow{6}{*}{\textbf{$\boldsymbol{J=90}$}}
& DFPCA         & 1.024 & 0.921 & 0.890 & 1.027 & 0.938 & 0.891 & 1.014 & 0.918 & 0.892 \\
& PADA          & 0.742 & 0.457 & 0.384 & 0.738 & 0.445 & 0.381 & 0.731 & 0.440 & 0.381 \\
& LRFPCA        & 0.996 & 0.902 & 0.883 & 0.994 & 0.911 & 0.885 & 0.978 & 0.901 & 0.871 \\
& VMFPCA        & 0.918 & 0.824 & 0.640 & 0.909 & 0.827 & 0.644 & 0.907 & 0.825 & 0.638 \\
& Spectral MPCA & 0.605 & 0.397 & 0.370 & 0.603 & 0.398 & 0.371 & 0.608 & 0.391 & 0.367 \\
\bottomrule
\end{tabular}
}
\end{sidewaystable}

\section{Case Study}\label{section_casestudy}

We analyze hourly PM$_{2.5}$ measurements (particulate matter with diameters below $2.5\,\mu$m) collected at six air-quality monitoring stations in Baoding, China, over six winter weeks in 2013. The resulting dataset contains incomplete daily trajectories over $42$ days and is treated as MFTS data. Following \citet{aue2015prediction, shang2017functional, hormann2022estimating}, we apply a square-root transformation to stabilize the variance and subtract station-specific mean diurnal functions, using only the training days to estimate the means.

\begin{figure}[!ht]
\captionsetup{width=\linewidth}
 \centering
		\includegraphics[width=0.8\linewidth]{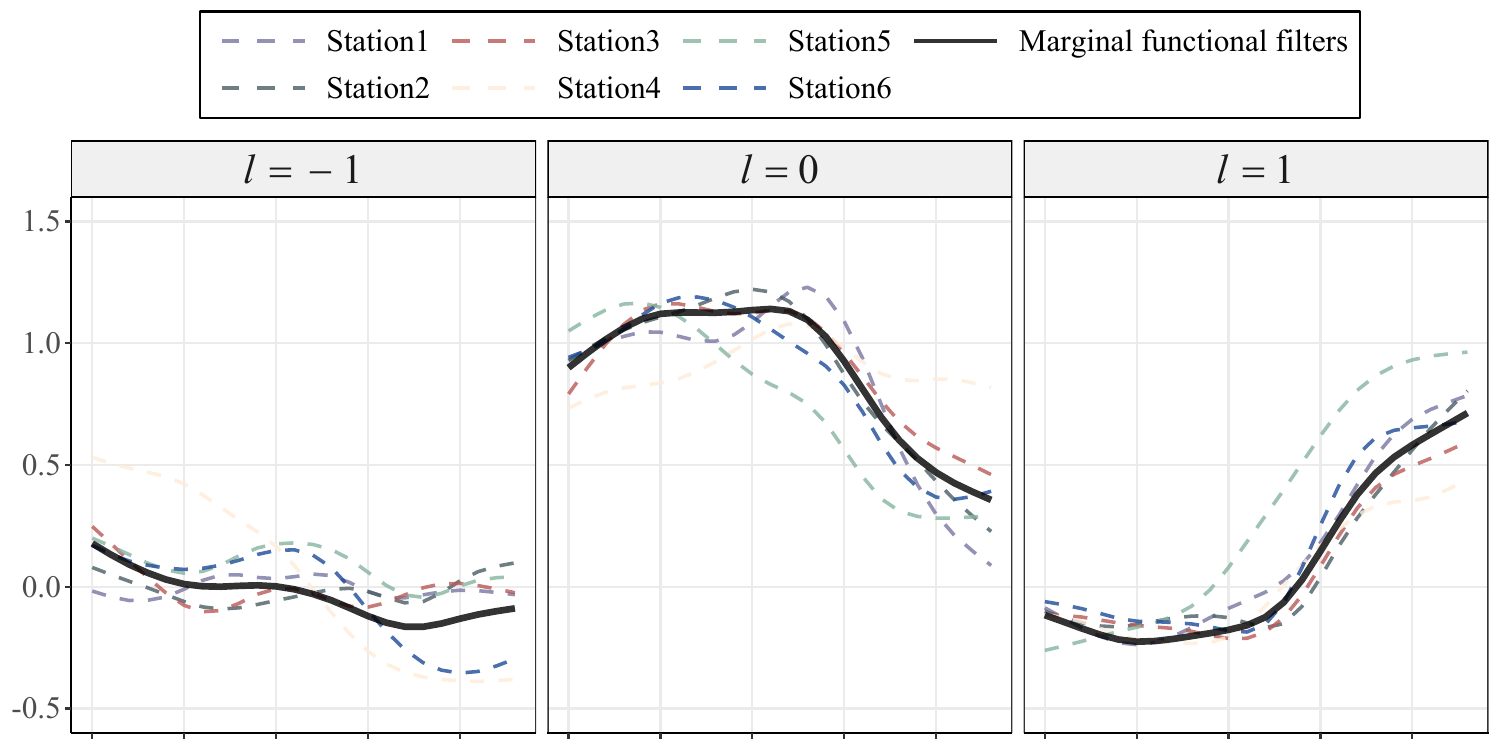}
	\caption{The marginal functional filters (solid lines) and the station-specific functional filters (dashed lines) for the first component.}
 \label{eigenfun1}
\end{figure}

We compare functional filters derived from the PM$_{2.5}$ dataset along two axes: the shared marginal functional filters $\{\hat{\phi}_{1l}(\cdot): |l|\leq L_1\}$ in Spectral MPCA (see Equation~\eqref{marginal_opt_ff_est}) versus the station-specific functional filters $\{\hat{\phi}_{i1l}(\cdot): |l|\leq L_1,\, i\in[p]\}$ computed by PADA \citep{guo2024unified}. The results are shown in Figure~\ref{eigenfun1}. We observe that the marginal functional filters capture common dynamics shared across most stations. This similarity suggests that a substantial portion of the variability can be explained by a shared set of filters, supporting the integration of information across subjects to learn efficient representations for the MFTS.

To evaluate the imputation and forecasting performance of Spectral MPCA, we compute the NMSE and NMSPE defined in Section~\ref{section_simulation} using the discrete observations in this dataset. To this end, we split the data into a training set of $J$ days and a test set of $P$ days. Imputation is performed and evaluated on the training set, while forecasting models are trained on the training set and assessed on the $P$ held-out days. Table~\ref{NMSE_NMSPE_real} reports NMSE for imputation and NMSPE for forecasting across the five methods considered in Section~\ref{section_simulation}. We observe that Spectral MPCA achieves the best forecasting accuracy and remains competitive for imputation, highlighting the superiority of our method for MFTS tasks.

\begin{table}[H]
\centering
\caption{The NMSEs and NMSPEs on Baoding PM$_{2.5}$ data.}
\label{NMSE_NMSPE_real}
\resizebox{0.9\textwidth}{!}{
\begin{tabular}{llccccc}
\toprule
\multicolumn{2}{c}{} 
& \textbf{DFPCA} 
& \textbf{PADA} 
& \textbf{LMFPCA} 
& \textbf{VMFPCA} 
& \textbf{Spectral MPCA} \\
\midrule

\multirow{2}{*}{\textbf{Imputation}}
& $\boldsymbol{J=39,\,P=3}$ 
& 0.378 & 0.267 & 0.543 & 0.544 & 0.288 \\
& $\boldsymbol{J=37,\,P=5}$ 
& 0.394 & 0.263 & 0.538 & 0.538 & 0.284 \\
\midrule

\multirow{2}{*}{\textbf{Forecasting}}
& $\boldsymbol{J=39,\,P=3}$ 
& 1.036 & 1.837 & 0.863 & 0.864 & 0.643 \\
& $\boldsymbol{J=37,\,P=5}$ 
& 0.966 & 1.740 & 0.873 & 0.938 & 0.785 \\
\bottomrule
\end{tabular}
}
\end{table}

\section{Discussion}
In this paper, we propose a frequency-domain integrative framework for MFTS that jointly captures inherent multi-way dependence structures and improves downstream analyses. Based on a novel concept called the marginal spectral density function, our proposed marginal dynamic KL expansion unifies within-curve, temporal, and cross-individual dependencies in the frequency domain. This framework contributes to MFTS analysis in three aspects: first, it ensures statistical efficiency and optimality for MFTS representation and modelling; second, it develops a practical estimation pipeline, including spectral density recovery and Bayes-based score extraction, to handle discretely observed curves and measurement error in a robust manner; third, it delivers a structured multivariate time series representation from the original MFTS, enabling the use of standard multivariate time series tools for modeling and inference with practical MFTS data.

In this article, forecasting with Spectral MPCA is performed using only historical trajectories. In many scenarios, it would be valuable to extend our method to incorporate additional information, such as exogenous covariates or subject-specific spatial features, to further improve predictive accuracy. In addition, extending this framework to high-dimensional regimes is also an important direction \citep{gao2019high, chang2024modelling, hallin2023factor}, as modern applications often involve a large number of subjects for estimation and inference. We leave these generalizations for future work.

\bibliographystyle{apalike}
\bibliography{paper-ref.bib}

\clearpage

\appendix
\section*{\LARGE Appendix}
\addcontentsline{toc}{section}{Appendix}

\section{Propositions and Proofs}\label{SM_TD}
\begin{proposition}
    For each $\omega \in [-\pi,\pi]$, the marginal spectral density function $f_{\mathcal{S}}(\cdot, \cdot \mid \omega)$ defined in \eqref{margin_spec} is a continuous, hermitian, and positive-definite function. 
    \label{marginal_sd_property}
\end{proposition}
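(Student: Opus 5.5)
The plan is to establish each of the three properties for every individual diagonal spectral density $f_{ii}(\cdot,\cdot\mid\omega)$. The properties then pass to $f_{\mathcal{S}}(\cdot,\cdot\mid\omega)$ because it is a finite average with positive weights $1/p$. Continuity, Hermitian symmetry and positive (semi-)definiteness are each preserved under finite nonnegative linear combinations. So the whole proof reduces to the single-subject case.

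\textbf{Hermitian symmetry.} Use weak stationarity: $C_{iih}(t,s)=\operatorname{cov}\{X_{i(j+h)}(t),X_{ij}(s)\}=\operatorname{cov}\{X_{ij}(s),X_{i(j-h)}(t)\}$. Hence $C_{ii,-h}(s,t)=C_{iih}(t,s)$, and these functions are real. Substituting into \eqref{f to r} and reindexing $h\mapsto -h$ gives $\overline{f_{ii}(t,s\mid\omega)}=\frac{1}{2\pi}\sum_h C_{iih}(t,s)e^{-\mathrm{i}h\omega}=f_{ii}(s,t\mid\omega)$. The interchange of conjugation and summation is justified by absolute convergence in $L^2([0,1]^2)$, which the summability condition guarantees.

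\textbf{Positive definiteness.} For $g\in L^2([0,1],\mathbb{C})$, consider the Fejér-type quantity
\[
\frac{1}{2\pi n}\operatorname{E}\Big|\sum_{j=1}^{n}\langle \varepsilon_{ij},\, g\rangle e^{\mathrm{i}j\omega}\Big|^2 .
\]
Here the pairing is chosen (possibly with $\bar g$) so that, after expanding, the expression equals $\frac{1}{2\pi}\sum_{|h|<n}(1-|h|/n)\iint C_{iih}(t,s)\,\cdot\,e^{\mathrm{i}h\omega}$, integrated against $g$ and $\bar g$ in the orientation matching the quadratic form $\iint \overline{g(t)} f_{ii}(t,s\mid\omega) g(s)\,\mathrm{d}t\,\mathrm{d}s$, or its transpose variant consistent with \eqref{margin_decom_f}. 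This quantity is nonnegative. By the summability condition and dominated convergence (Cesàro means of an absolutely convergent series), it converges to the quadratic form of $f_{ii}(\cdot,\cdot\mid\omega)$. The form is therefore $\ge 0$, i.e. positive semidefinite, which is the sense in which "positive-definite" is used here for Mercer's theorem.

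\textbf{Continuity.} Continuity in $\omega$ in $L^2$ (Hilbert–Schmidt) norm follows from uniform convergence of the series, again by the summability condition and the Weierstrass M-test. Pointwise continuity in $(t,s)$ is the main obstacle: the summability condition only controls $L^2$ norms, so it gives an $L^2$ kernel, not a continuous one. My first attempt would be to invoke the standard FDA assumption implicit in the paper (via Mercer's theorem) that each $C_{iih}$ is continuous on $[0,1]^2$, together with a uniform bound $\sum_h\sup_{t,s}|C_{iih}(t,s)|<\infty$. Then uniform convergence of continuous partial sums gives joint continuity of $f_{ii}$ in $(t,s,\omega)$. Absent such an assumption, I would settle for $L^2$-continuity and note that Mercer's decomposition \eqref{margin_decom_f} still holds in the $L^2$ (Hilbert–Schmidt) sense by the spectral theorem for compact self-adjoint nonnegative operators.
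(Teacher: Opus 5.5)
Your proposal is correct and follows the paper's route. The paper likewise reduces everything to the individual kernels $f_{ii}(\cdot,\cdot\mid\omega)$ and passes the properties through the average with weights $1/p$. The difference is that the paper takes the Hermitian symmetry, continuity and positivity of each $f_{ii}$ as known from the dynamic FPCA literature, whereas you verify the first and last directly (via stationarity and the Fej\'er limit).

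There is one slip in your Hermitian step: $\operatorname{cov}\{X_{ij}(s),X_{i(j-h)}(t)\}$ equals $C_{iih}(s,t)$, not $C_{iih}(t,s)$. The middle term should read $\operatorname{cov}\{X_{i(j-h)}(s),X_{ij}(t)\}$. Your conclusion $C_{ii(-h)}(s,t)=C_{iih}(t,s)$ is nonetheless right.

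Your caveat on continuity is well founded and more careful than the paper's bare citation. The stated $L^2$ summability condition alone does not make the kernel continuous in $(t,s)$. That part of the claim therefore really does need the extra assumption you name: continuous $C_{iih}$ with $\sum_{h}\sup_{t,s}|C_{iih}(t,s)|<\infty$. Without it, the decomposition should be read in the Hilbert--Schmidt sense, as you suggest.
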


\begin{proof}
    The continuity and hermiticity follow from the corresponding properties of individual spectral density functions $\{ f_{ii}(\cdot, \cdot \mid \omega); i \in [p], \omega \in [-\pi,\pi] \}$ \citep{hormann2015dynamic}. Noting that for any $\omega \in [-\pi,\pi]$ and $i \in [p]$, $f_{ii}(\cdot, \cdot \mid \omega)$ is positive-definite, i.e., 
    \begin{align*}
        \int_0^1 \int_0^1  f_{ii}(t, s \mid \omega) u(t) \overline{u(s)}   \mathrm{d} t \mathrm{d} s \geq 0, \quad u(\cdot) \in L^2([0,1], \mathbb{C}).
    \end{align*}
    Therefore, according to Fubini's theorem, we have
    \begin{align*}
        &\int_0^1 \int_0^1  f_{\mathcal{S}}(t, s \mid \omega) u(t) \overline{u(s)}   \mathrm{d} t \mathrm{d} s \\
        &= \frac{1}{p} \sum_{i \in [p] } \int_0^1 \int_0^1  f_{ii}(t, s \mid \omega) u(t) \overline{u(s)}   \mathrm{d} t \mathrm{d} s \geq 0,  \quad \omega \in [-\pi,\pi],
    \end{align*}
    for any $u(\cdot)$.
\end{proof}

\

\

\begin{proposition}
Under the marginal dynamic KL expansion \eqref{marginal_dfpca}, the sequence $\boldsymbol{\xi}_{i \cdot k} = \{ \xi_{ijk}; j \in \mathbb{Z} \}$ form a weakly stationary time series with spectral density 
\begin{align*}
    \tilde{\eta}_{ik}(\omega) := \int_0^1 \int_0^1 \psi_k(t \mid \omega) f_{ii}(t,s \mid  \omega) \overline{\psi_k(s \mid \omega)}  \mathrm{d}t  \mathrm{d}s, \quad \omega \in [-\pi,\pi],
\end{align*}
where $i \in [p]$ and $k \geq 1$.
\label{marginal_score_sd}
\end{proposition}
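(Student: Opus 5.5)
The plan is to treat $\xi_{ijk}=\sum_{l\in\mathbb{Z}}\langle \varepsilon_{i(j-l)},\phi_{kl}\rangle$ as the output of a linear time-invariant filter applied to the stationary functional series $\{\varepsilon_{ij}; j\in\mathbb{Z}\}$. We compute its autocovariance directly and then identify its Fourier transform. Throughout, $\{\phi_{kl}; l\in\mathbb{Z}\}\in\mathcal{H}([0,1],\mathbb{R})$, with frequency response $\psi_k(\cdot\mid\omega)$ (absorbing $\tilde\nu_k$). By Parseval for $L^2([-\pi,\pi])$-valued Fourier series, $\sum_l\|\phi_{kl}\|^2=\frac{1}{2\pi}\int_{-\pi}^{\pi}\|\psi_k(\cdot\mid\omega)\|^2\mathrm{d}\omega=1$.

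\textbf{Step 1 (existence and stationarity).} Define partial sums $\xi^{(L)}_{ijk}=\sum_{|l|\le L}\langle \varepsilon_{i(j-l)},\phi_{kl}\rangle$. We show they are Cauchy in $L^2(\Omega)$. For this, write $\operatorname{E}|\sum_{L<|l|\le L'}\langle\varepsilon_{i(j-l)},\phi_{kl}\rangle|^2$ via the inverse Fourier representation \eqref{r to f} of $C_{ii(l-l')}$. This gives the spectral integral
\begin{align*}
\int_{-\pi}^{\pi}\int_0^1\int_0^1 \overline{\Phi_{L,L'}(t\mid\omega)}\,f_{ii}(t,s\mid\omega)\,\Phi_{L,L'}(s\mid\omega)\,\mathrm{d}t\,\mathrm{d}s\,\mathrm{d}\omega,
\end{align*}
where $\Phi_{L,L'}(\cdot\mid\omega)=\sum_{L<|l|\le L'}\phi_{kl}(\cdot)e^{\mathrm{i}l\omega}$. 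Since $\sup_\omega$ of the operator norm of $f_{ii}(\cdot,\cdot\mid\omega)$ is bounded by $\frac{1}{2\pi}\sum_h\|C_{iih}\|$, which is finite by the summability condition, the integral is at most a constant times $\int\|\Phi_{L,L'}\|^2\mathrm{d}\omega=2\pi\sum_{L<|l|\le L'}\|\phi_{kl}\|^2\to0$. Hence $\xi_{ijk}$ is a well-defined mean-zero $L^2$ limit. Shift-invariance of the filter and weak stationarity of $\varepsilon_{ij}$ then give that $\operatorname{cov}(\xi_{i(j+h)k},\xi_{ijk})$ does not depend on $j$.

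\textbf{Step 2 (autocovariance in spectral form).} By $L^2$-continuity of covariances,
\begin{align*}
\gamma_{ik}(h):=\operatorname{E}\{\xi_{i(j+h)k}\overline{\xi_{ijk}}\}=\sum_{l,l'}\int_0^1\int_0^1\phi_{kl}(t)\,C_{ii(h-l+l')}(t,s)\,\phi_{kl'}(s)\,\mathrm{d}t\,\mathrm{d}s .
\end{align*}
Substituting \eqref{r to f} and exchanging the sums with the $\omega$-integral (justified by the $L^2$ convergence above, applied to the finite-$L$ truncations before passing to the limit) gives
\begin{align*}
\gamma_{ik}(h)=\int_{-\pi}^{\pi}e^{-\mathrm{i}h\omega}\int_0^1\int_0^1\Big(\sum_l\phi_{kl}(t)e^{\mathrm{i}l\omega}\Big)f_{ii}(t,s\mid\omega)\Big(\sum_{l'}\phi_{kl'}(s)e^{-\mathrm{i}l'\omega}\Big)\mathrm{d}t\,\mathrm{d}s\,\mathrm{d}\omega .
\end{align*}
By the definition $\phi_{kl}=\frac{1}{2\pi}\int\psi_k(\cdot\mid\omega)e^{-\mathrm{i}l\omega}\mathrm{d}\omega$ and Fourier inversion in $\mathcal{H}$, we have $\sum_l\phi_{kl}(\cdot)e^{\mathrm{i}l\omega}=\psi_k(\cdot\mid\omega)$ in $L^2$. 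Because $\phi_{kl}$ is real, $\sum_{l'}\phi_{kl'}(\cdot)e^{-\mathrm{i}l'\omega}=\overline{\psi_k(\cdot\mid\omega)}$. Therefore $\gamma_{ik}(h)=\int_{-\pi}^{\pi}\tilde\eta_{ik}(\omega)e^{-\mathrm{i}h\omega}\mathrm{d}\omega$, with $\tilde\eta_{ik}$ exactly as in the statement. Moreover, $\tilde\eta_{ik}(\omega)\ge0$ and $\tilde\eta_{ik}$ is integrable by the operator bound in Step 1. By uniqueness of Fourier coefficients, $\tilde\eta_{ik}$ is the spectral density of $\boldsymbol{\xi}_{i\cdot k}$, matching the convention \eqref{r to f}.

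\textbf{Main obstacle.} The delicate point is the interchange of the doubly infinite sums over $l,l'$ with the $\omega$-integral in Step 2. Only $L^2$, not absolute, convergence of $\sum_l\phi_{kl}e^{\mathrm{i}l\omega}$ is available. I would handle this by working with the truncations $\Phi_L(\cdot\mid\omega)=\sum_{|l|\le L}\phi_{kl}e^{\mathrm{i}l\omega}$, for which the identity holds by a finite computation. I would then let $L\to\infty$ on both sides: on the left via Step 1, and on the right via the uniform operator bound on $f_{ii}(\cdot,\cdot\mid\omega)$ together with $\int\|\Phi_L-\psi_k\|^2\mathrm{d}\omega\to0$.
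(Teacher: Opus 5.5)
Your proof is correct and takes the paper's route. Like the paper, you expand the score autocovariance as $\sum_{l,l'}\int_0^1\int_0^1\phi_{kl}(t)C_{ii(h-l+l')}(t,s)\phi_{kl'}(s)\,\mathrm{d}t\,\mathrm{d}s$ (free of $j$), substitute \eqref{r to f}, and collapse the filter sums into $\psi_k(t\mid\omega)$ and $\overline{\psi_k(s\mid\omega)}$; your identification via uniqueness of Fourier coefficients is just the reverse reading of the paper's final Fourier-inversion step.

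You add rigor the paper omits: existence of $\xi_{ijk}$ as an $L^2$ limit, and a truncation argument, via the uniform operator bound on $f_{ii}$, justifying the sum--integral interchange. The one slip is that your Step~1 integrand should be $\Phi_{L,L'}(t\mid\omega)f_{ii}(t,s\mid\omega)\overline{\Phi_{L,L'}(s\mid\omega)}$, consistent with your Step~2 at $h=0$. This is harmless because you only use an operator-norm bound there.
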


\begin{proof}
For given $i \in [p]$ and $k \geq 1$, let $C_{h}^{\boldsymbol{\xi}_{i j k}} := \operatorname{E} \xi_{i(j+h)k} \xi_{ijk}$ with $j,h \in \mathbb{Z}$.
According to the definition of scores $\xi_{ijk}$ in the main text, we have
\begin{equation*}
    \begin{aligned}
            C_{h}^{\boldsymbol{\xi}_{i j k}} 
    &= \operatorname{E} \bigg\{ \sum_{l_1 \in \mathbb{Z}} \langle \varepsilon_{i(j+h-l_1)}, \phi_{kl_1} \rangle \bigg\} \bigg\{ \sum_{l_2 \in \mathbb{Z}} \langle \varepsilon_{i(j-l_2)}, \phi_{kl_2} \rangle \bigg\} \\
    &= \sum_{l_1, l_2 \in \mathbb{Z}} \int_0^1 \int_0^1 \phi_{kl_1}(t) \operatorname{E} \{\varepsilon_{i(j+h-l_1)}, \varepsilon_{i(j-l_2)}\} \phi_{kl_2}(s)  \mathrm{d}t \mathrm{d}s,
    \end{aligned}
\end{equation*}
where $\phi_{kl}(\cdot)$ represents the marginal functional filters in \eqref{marginal_dfpca}. Since $\boldsymbol{\varepsilon}_{i\cdot} = \{\varepsilon_{ij}; j \in \mathbb{Z}\}$ is a weakly stationary functional time series with auto-covariance $\{C_{iih}(\cdot,\cdot); h \in \mathbb{Z}\}$, we have
\begin{equation}
    \begin{aligned}
    C_{h}^{\boldsymbol{\xi}_{i j k}} 
    = \sum_{l_1, l_2 \in \mathbb{Z}} \int_0^1 \int_0^1 \phi_{kl_1}(t) C_{ii(l_2-l_1+h)}(t,s) \phi_{kl_2}(s)  \mathrm{d}t \mathrm{d}s.
    \end{aligned}
    \label{auto_cov_score}
\end{equation}
Therefore, given $i$ and $k$, $C_{h}^{\boldsymbol{\xi}_{i j k}}$ is free of $j$ for any $h \in \mathbb{Z}$, and the weak stationarity is proven. In the following, we simplify $C_{h}^{\boldsymbol{\xi}_{i j k}}$ as $C_{ikh}$ for convenience.

Combining \eqref{auto_cov_score} with the Fourier inversion theorem, the spectral density of scores $\boldsymbol{\xi}_{i \cdot k}$ can be shown as
\begin{align*}
    \tilde{\eta}_{ik}(\omega) 
    & = \frac{1}{2\pi} \sum_{h \in \mathbb{Z}} C_{ikh} \exp(\mathrm{i} h \omega)\\
    &= \frac{1}{2\pi} \sum_{h \in \mathbb{Z}} \sum_{l_1, l_2 \in \mathbb{Z}} \int_0^1 \int_0^1 \phi_{kl_1}(t) C_{ii(l_2-l_1+h)}(t,s) \phi_{kl_2}(s)  \mathrm{d}t \mathrm{d}s \exp(\mathrm{i} h \omega) \\
    &= \frac{1}{2\pi} \sum_{h \in \mathbb{Z}} \sum_{l_1, l_2 \in \mathbb{Z}} \int_0^1 \int_0^1 \int_{-\pi}^{\pi} \phi_{kl_1}(t) f_{ii}(t,s \mid \omega_1) \exp\{-\mathrm{i} (l_2 - l_1 + h) \omega_1\} \phi_{kl_2}(s) \mathrm{d}\omega_1  \mathrm{d}t \mathrm{d}s \exp(\mathrm{i} h \omega) \\
    &= \frac{1}{2\pi} \sum_{h \in \mathbb{Z}} \int_0^1 \int_0^1 \int_{-\pi}^{\pi} \psi_{k}(t \mid \omega_1) f_{ii}(t,s \mid \omega_1) \overline{\psi_{k}(s \mid \omega_1)} \exp(-\mathrm{i} h \omega_1) \mathrm{d}\omega_1  \mathrm{d}t \mathrm{d}s \exp(\mathrm{i} h \omega) \\
    &= \int_0^1 \int_0^1  \psi_{k}(t \mid \omega) f_{ii}(t,s \mid \omega) \overline{\psi_{k}(s \mid \omega)} \mathrm{d}t \mathrm{d}s, \quad \omega \in [-\pi,\pi],\ i \in [p], \ k \geq 1.
\end{align*}
\end{proof}

\

\begin{proof}[Proof for Theorem \ref{opt_approximation}]
For convenience, we introduce some notions about linear operators. The Hilbert-Schmidt norm and the trace norm for an operator $\mathcal{T}: L^{2}([0,1],\mathbb{C}) \to L^{2}([0,1],\mathbb{C})$ are defined as
\begin{align*}
    \|\mathcal{T}\|_{\operatorname{HS}} = \sqrt{\sum_{k=1}^{\infty} \|\mathcal{T}e_k\|^{2}}, \quad \operatorname{tr}(\mathcal{T}) = \sum_{k=1}^{\infty} \langle(\mathcal{T}^{*}\mathcal{T})^{1/2} e_k, e_k \rangle,
\end{align*}
where $e_k(\cdot)$s are orthonormal bases of $L^{2}([0,1],\mathbb{C})$, and $(\cdot)^{*}$ denotes the adjoint operation for an operator. Furthermore, based on the definition of the marginal spectral density function $f_{\mathcal{S}}(\cdot, \cdot \mid \omega)$ in \eqref{margin_spec}, we define the corresponding integral operator as
\begin{align}
    \mathcal{F}_{\mathcal{S}}^{\omega} u := \int_0^1 f_{\mathcal{S}}(t, s \mid \omega) u(t) \mathrm{d}t, 
    \label{F_operator_def}
\end{align}
for any $u(\cdot) \in L^{2}([0,1],\mathbb{C})$ and $\omega \in [-\pi,\pi]$. According to Proposition \ref{marginal_sd_property}, $\mathcal{F}_{\mathcal{S}}^{\omega}$ admits a spectral decomposition \citep{hsing2015theoretical} as
\begin{align*}
    \mathcal{F}_{\mathcal{S}}^{\omega} = \sum_{k=1}^{\infty} \eta_k(\omega) \psi_k(\omega) \otimes \psi_k(\omega), 
\end{align*}
where $\psi_k(\omega) \otimes \psi_k(\omega)$ is a tensor product operator between $L^{2}([0,1],\mathbb{C})$ satisfying
\begin{align*}
    \{\psi_k(\omega) \otimes \psi_k(\omega)\} u = \psi_k(\omega) \langle \psi_k(\omega) , u \rangle, 
\end{align*}
where $u(\cdot) \in L^{2}([0,1],\mathbb{C})$

According to the definition of $\tilde{\boldsymbol{\varepsilon}}_{j}^{K}(\cdot)$ in the main text, we have
\begin{equation}
\begin{aligned}
&\operatorname{E} \| \boldsymbol{\varepsilon}_{j} - \tilde{\boldsymbol{\varepsilon}}^{K}_{j} \|_{p}^{2} \\
&= \operatorname{E} \sum_{i \in [p]} \bigg[ \int_{0}^{1} \big\{
\varepsilon_{ij}(t)
-\sum_{k \in [K]} \sum_{l \in \mathbb{Z}}
\tilde{\phi}_{kl}(t)  \tilde{\xi}_{i(j+l)k} \big\}^2 \mathrm{d}t \bigg] \\
&= \sum_{i \in [p]} \Bigg( \int_{0}^{1} \bigg[ C_{ii0}(t,t) - 2\sum_{k \in [K]} \sum_{l \in \mathbb{Z}} \operatorname{E}\varepsilon_{ij}(t) \tilde{\phi}_{kl}(t) \tilde{\xi}_{i(j+l)k} + \operatorname{E} \bigg\{
\sum_{k \in [K]} \sum_{l \in \mathbb{Z}} \tilde{\phi}_{kl}(t) \tilde{\xi}_{i(j+l)k}
\bigg\}^{2} \bigg] \mathrm{d}t\Bigg), \\
&= \int_{0}^{1} \sum_{i \in [p]} C_{ii0}(t,t) \mathrm{d}t - 2 \int_{0}^{1} \sum_{i \in [p]} \sum_{k \in [K]} \sum_{l \in \mathbb{Z}} \operatorname{E}\varepsilon_{ij}(t) \tilde{\phi}_{kl}(t) \tilde{\xi}_{i(j+l)k} \mathrm{d}t \\
& \quad + \int_{0}^{1} \sum_{i \in [p]} \operatorname{E} \bigg\{
\sum_{k \in [K]} \sum_{l \in \mathbb{Z}} \tilde{\phi}_{kl}(t) \tilde{\xi}_{i(j+l)k}
\bigg\}^{2} \mathrm{d}t,
\end{aligned}
\label{all_eq}
\end{equation}
for any $K \geq 1$.

First, according to the inverse Fourier transform \eqref{r to f}, we have
\begin{align*}
    C_{ii0}(t,t) = \int_{-\pi}^{\pi} f_{ii}(t,t \mid \omega)  \mathrm{d} \omega, \quad t \in [0,1],\ i \in [p].
\end{align*}
Therefore, the first term in \eqref{all_eq} can be shown as
\begin{align}
     \int_{0}^{1} \sum_{i \in [p]} C_{ii0}(t,t) \mathrm{d} t = p \int_{-\pi}^{\pi} \int_{0}^{1} f_{\mathcal{S}}(t,t \mid \omega) \mathrm{d} t  \mathrm{d} \omega = p \int_{-\pi}^{\pi} \operatorname{tr} (\mathcal{F}_{\mathcal{S}}^{\omega})  \mathrm{d} \omega,
    \label{1part_operator_eq}
\end{align}
where $\mathcal{F}_{\mathcal{S}}^{\omega}$ is defined as \eqref{F_operator_def}.

Next, based on the definition of $\tilde{\xi}_{ijk}$ in the main text, we obtain
\begin{align*}
    &\sum_{k \in [K]} \sum_{l \in \mathbb{Z}} \operatorname{E} \varepsilon_{ij}(t) \tilde{\phi}_{kl}(t) \tilde{\xi}_{i(j+l)k} \\
    &= \sum_{k \in [K]} \sum_{l_1, l_2 \in \mathbb{Z}} \tilde{\phi}_{kl_1}(t) \int_{0}^{1} C_{ii(l_1-l_2)}(s,t) \tilde{v}_{kl_2}(s) \mathrm{d} s \\
    &= \sum_{k \in [K]} \sum_{l_1, l_2 \in \mathbb{Z}} \tilde{\phi}_{kl_1}(t) \int_{-\pi}^{\pi}  \int_{0}^{1} f_{ii}(s,t \mid \omega) \exp\{ \mathrm{i}(l_2-l_1) \omega\} \tilde{v}_{kl_2}(s) \mathrm{d} s \mathrm{d} \omega, \quad t \in [0,1], \ i \in [p].
\end{align*}
where $\{\tilde{\phi}_{kl}(\cdot); l \in \mathbb{Z}\}$ and $\{\tilde{v}_{kl}(\cdot); l \in \mathbb{Z}\}$ are sequences belonging to $\mathcal{H}([0,1], \mathbb{R})$. According to the definition of $\mathcal{H}([0,1], \mathbb{R})$ in the main text, there exist $\tilde{\psi}_{k}(\cdot \mid \omega)$ and $\tilde{w}_{k}(\cdot \mid \omega)$ s.t. $\tilde{\psi}_{k}(\cdot \mid \omega) = \sum_{l \in \mathbb{Z}} \tilde{\phi}_{kl}(\cdot) \exp(\mathrm{i} l \omega)$ and $\tilde{w}_{k}(\cdot \mid \omega) = \sum_{l \in \mathbb{Z}} \tilde{v}_{kl}(\cdot) \exp(\mathrm{i} l \omega)$ for all $\omega \in [-\pi,\pi]$ and $k \in [K]$. Therefore, we have
\begin{equation}
    \begin{aligned}
            &\sum_{k \in [K]} \sum_{l \in \mathbb{Z}} \operatorname{E} \varepsilon_{ij}(t) \tilde{\phi}_{kl}(t) \tilde{\xi}_{i(j+l)k} \\
    &= \int_{-\pi}^{\pi}  \int_{0}^{1} \sum_{k \in [K]} \overline{\tilde{\psi}_{k}(t \mid \omega)} f_{ii}(s,t \mid \omega) \tilde{w}_{k}(s \mid \omega) \mathrm{d} s \mathrm{d} \omega, \quad t \in [0,1], \ i \in [p].
    \end{aligned}
    \label{2part_subeq}
\end{equation}
According to \eqref{2part_subeq}, the second term in \eqref{all_eq} can be shown as
\begin{align*}
    &2 \int_{0}^{1} \sum_{i \in [p]} \sum_{k \in [K]} \sum_{l \in \mathbb{Z}} \operatorname{E}\varepsilon_{ij}(t) \tilde{\phi}_{kl}(t) \tilde{\xi}_{i(j+l)k} \mathrm{d} t \\
    &= p \int_{-\pi}^{\pi} \int_{0}^{1} \int_{0}^{1} f_{\mathcal{S}}(s,t \mid \omega) G_K(t,s \mid \omega) + G_K(t,s \mid \omega) f_{\mathcal{S}}(s,t \mid \omega)   \mathrm{d} s \mathrm{d} t \mathrm{d} \omega, \quad t \in [0,1],
\end{align*}
where $G_K(t,s \mid \omega) = \sum_{k \in [K]} \overline{\tilde{\psi}_{k}(t \mid \omega)} \tilde{w}_{k}(s \mid \omega)$. Furthermore, we define $\mathcal{G}_{K}^{\omega}$ as the corresponding integral operator of $G_K(\cdot, \cdot \mid \omega)$ for given $K \geq 1$ and $\omega \in [-\pi,\pi]$, then we have
\begin{align}
    2 \int_{0}^{1} \sum_{i \in [p]} \sum_{k \in [K]} \sum_{l \in \mathbb{Z}} \operatorname{E}\varepsilon_{ij}(t) \tilde{\phi}_{kl}(t) \tilde{\xi}_{i(j+l)k} \mathrm{d} t 
    = p \int_{-\pi}^{\pi} \operatorname{tr} \big\{ \mathcal{F}_{\mathcal{S}}^{\omega} \circ \mathcal{G}_{K}^{\omega} + (\mathcal{G}_{K}^{\omega})^{*} \circ (\mathcal{F}_{\mathcal{S}}^{\omega})^{*} \big\}  \mathrm{d} \omega,
    \label{2part_operator_eq}
\end{align}
where $\circ$ represents the multiplication between operators.

Similar to the above derivation, it can be proved that
\begin{align*}
&\operatorname{E} \bigg\{ \sum_{k \in [K]} \sum_{l \in \mathbb{Z}} \tilde{\phi}_{kl}(t) \tilde{\xi}_{i(j+l)k} \bigg\}^{2} \\
&= \sum_{k_1, k_2 \in [K]} \sum_{l_1, l_2 \in \mathbb{Z}} \tilde{\phi}_{k_1l_1}(t) \tilde{\phi}_{k_2l_2}(t) \operatorname{E} \big\{\tilde{\xi}_{i(j+l_1)k_1} \tilde{\xi}_{i(j+l_2)k_2} \big\} \\
&=\sum_{k_1, k_2 \in [K]} \sum_{l_1, l_2, l_3, l_4 \in \mathbb{Z}} \tilde{\phi}_{k_1l_1}(t) \tilde{\phi}_{k_2l_2}(t) \int_{0}^{1} \int_{0}^{1} C_{ii(l_1-l_3-l_2+l_4)}(s,u) \tilde{v}_{k_1l_3}(s) \tilde{v}_{k_2l_4}(u) \mathrm{d} s \mathrm{d} u \\
&=\sum_{k_1, k_2 \in [K]} \sum_{l_1, l_2, l_3, l_4 \in \mathbb{Z}} \tilde{\phi}_{k_1l_1}(t) \tilde{\phi}_{k_2l_2}(t) \\
& \quad \cdot \int_{-\pi}^{\pi} \int_{0}^{1} \int_{0}^{1} f_{ii}(s,u \mid \omega) \exp\{ \mathrm{i} (l_2+l_3-l_1-l_4)\omega \} \tilde{v}_{k_1l_3}(s) \tilde{v}_{k_2l_4}(u) \mathrm{d} s \mathrm{d} u \mathrm{d}\omega \\
&= \int_{-\pi}^{\pi} \int_{0}^{1} \int_{0}^{1} \sum_{k_1, k_2 \in [K]} \overline{\tilde{\psi}_{k_1}(t \mid \omega)}  {\tilde{\psi}_{k_2}(t \mid \omega)} f_{ii}(s,u \mid \omega) \tilde{w}_{k_1}(s \mid \omega) \overline{\tilde{w}_{k_2}(u \mid \omega)} \mathrm{d} s \mathrm{d} u \mathrm{d}\omega, \quad t \in [0,1],
\end{align*}
for any given $i \in [p]$. According to the definition of operators $\mathcal{G}_{K}^{\omega}$ and $\mathcal{F}_{\mathcal{S}}^{\omega}$, the third term in \eqref{all_eq} can be shown as
\begin{equation}
\begin{aligned}
&\int_{0}^{1} \sum_{i \in [p]} \operatorname{E} \bigg\{ \sum_{k \in [K]} \sum_{l \in \mathbb{Z}} \tilde{\phi}_{kl}(t) \tilde{\xi}_{i(j+l)k} \bigg\}^{2} \mathrm{d} t \\
&= p \int_{-\pi}^{\pi} \int_{0}^{1} \int_{0}^{1} \int_{0}^{1} \overline{G_K(t,u \mid \omega)}  f_{\mathcal{S}}(s,u \mid \omega) G_K(t,s \mid \omega)  \mathrm{d} t \mathrm{d} s \mathrm{d} u \mathrm{d}\omega \\
&= p \int_{-\pi}^{\pi}  \operatorname{tr}\big\{ (\mathcal{G}_{K}^{\omega})^{*} \circ \mathcal{F}_{\mathcal{S}}^{\omega} \circ \mathcal{G}_{K}^{\omega} \big\}  \mathrm{d}\omega.
\end{aligned}
 \label{3part_operator_eq}
\end{equation}

Combining \eqref{all_eq}, \eqref{1part_operator_eq}, \eqref{2part_operator_eq} and \eqref{3part_operator_eq}, we have
\begin{align*}
    \operatorname{E} \| \boldsymbol{\varepsilon}_{j} - \tilde{\boldsymbol{\varepsilon}}^{K}_{j} \|_{p}^{2} = p \int_{-\pi}^{\pi} \| (\mathcal{F}_{\mathcal{S}}^{\omega})^{1/2} - (\mathcal{F}_{\mathcal{S}}^{\omega})^{1/2} \circ \mathcal{G}_{K}^{\omega} \|_{\operatorname{HS}}  \mathrm{d}\omega,
\end{align*}
where $(\mathcal{F}_{\mathcal{S}}^{\omega})^{1/2} = \sum_{k=1}^{\infty} \{\eta_k(\omega)\}^{1/2} \psi_k(\omega) \otimes \psi_k(\omega)$. Therefore, $\operatorname{E} \| \boldsymbol{\varepsilon}_{j} - \tilde{\boldsymbol{\varepsilon}}^{K}_{j} \|_{p}^{2}$ reaches its minimum $\operatorname{E} \| \boldsymbol{\varepsilon}_{j} - {\boldsymbol{\varepsilon}}^{K}_{j} \|_{p}^{2}$ when $\tilde{\psi}_{k}(t \mid \omega) = \tilde{w}_k(t \mid \omega) =  {\psi}_{k}(t \mid \omega) $ for $t \in [0,1]$, $\omega \in [-\pi,\pi]$ and $k \in [K]$ \citep{tan2024graphical}. 
\end{proof}

\newpage
\clearpage

\section{Implementation Details}

\subsection{Matrix Forms for the Score Extraction Optimization}\label{mat_posterior}
In this subsection, we provide detailed explanations of the matrix forms \eqref{mat_log_likelihood} and \eqref{mat_log_prior}. For this purpose, we define $\tilde{N} := \sum_{i \in [p]} \sum_{j \in [J]} N_{ij}$ as the total number of observations, and introduce a one-to-one mapping
\begin{align*}
    m : \{(i,j,z) : i \in [p], j \in [J], z \in [N_{ij}]\} \to [\tilde{N}]
\end{align*}
that orders all triplets $(i,j,z)$. Using this mapping, the demeaned observations $\{\tilde{Y}_{ijz} = Y_{ijz} - \mu_{i}(t_{ijz}); i \in [p], j \in [J], z \in [N_{ij}] \}$ can be vectorized as
\begin{align}
\bm y := \{\tilde{Y}_{m(i,j,z)}\} = (\tilde{\bm Y}^\top_{11 \cdot}, \ldots, \tilde{\bm Y}^\top_{1 J \cdot}, \ldots, \tilde{\bm Y}^\top_{p 1 \cdot}, \ldots, \tilde{\bm Y}^\top_{p J \cdot})^\top \in \mathbb{R}^{\tilde{N}},
\label{obs_stack}
\end{align}
where $\tilde{\bm Y}_{i j \cdot} = (\tilde{Y}_{i j 1}, \ldots, \tilde{Y}_{i j N_{ij}})^\top \in \mathbb{R}^{N_{ij}}$ for any $i$ and $j$. Similarly, the scores $\{\boldsymbol{\xi}_{\cdot \cdot k}; k \in [K]\}$ in the Bayesian formulation \eqref{posterior} can be vectorized via the mapping 
\begin{align*}
    c : \{(i',j',k) : i' \in [p], j' \in \{1 - L_k, \dots, J + L_k\}, k \in [K]\} \to [d_\xi],
\end{align*}
where $d_\xi := \sum_{k \in [K]} p T_k$ and $T_k = J + 2L_k$ for any $k$. Under this mapping, we define the vectorized scores as
\begin{align}
    \bm \xi := \{\xi_{c(i',j',k)}\} = (\bm{\xi}_{1}^\top,\dots,\bm{\xi}_{K}^\top)^\top \in \mathbb{R}^{d_\xi},
    \label{score_stack}
\end{align}
where 
\begin{equation*}
  \bm{\xi}_{k}
  :=
  \bigl(
    \xi_{1(1-L_k)k},
    \dots,
    \xi_{p(1-L_k)k},
    \dots,
    \xi_{1(J+L_k)k},
    \dots,
    \xi_{p(J+L_k)k}
  \bigr)^\top
  \in \mathbb{R}^{p T_k}, \quad k \in [K].
\end{equation*}

To obtain the matrix form of the log-likelihood $\log \mathcal{L}(\boldsymbol{\xi}_{\cdot \cdot 1}, \ldots, \boldsymbol{\xi}_{\cdot \cdot K} \mid \boldsymbol{Y})$, we define the $(m(i,j,z), c(i',j',k))$-th element of the design matrix $\bm A \in \mathbb{R}^{\tilde{N} \times d_\xi}$ as
\begin{equation}
  \label{eq:A-entry}
  [\bm A]_{m(i,j,z), c(i',j',k)}
  :=
  \begin{cases}
    \phi_{kl}(t_{ijz}),
    & \text{if } i' = i,\ j' = j + l \text{ for some } l \in \{-L_k,\dots,L_k\},\\[0.2em]
    0, & \text{otherwise}.
  \end{cases}
\end{equation}
Combining \eqref{score_stack} and \eqref{eq:A-entry}, the $m(i,j,z)$-th element of $\bm{A\xi} \in \mathbb{R}^{\tilde{N}}$ is
\begin{equation}
  \label{eq:A-predictor}
  [\bm{A \xi}]_{m(i,j,z)}
  =
  \sum_{k=1}^K \sum_{|l| \le L_k} \phi_{kl}(t_{ijz}) \, \xi_{i (j+l) k}.
\end{equation}
Moreover, we define the weight matrix $\bm W \in \mathbb{R}^{\tilde{N} \times \tilde{N}}$ by setting its diaonal element as 
\begin{equation}
  \label{eq:W-def}
  [\bm W]_{\tilde{n}, \tilde{n}} := \sigma_{i}^{-2}, \quad \tilde{n} \in [\tilde{N}],
\end{equation}
if $\tilde{n} = m(i,j,k)$ for any $j$ and $k$, so that each observation from subject $i$ is weighted by $\sigma_i^{-2}$. According to \eqref{obs_stack}, \eqref{eq:A-predictor} and \eqref{eq:W-def}, we have
\begin{equation}
    \begin{aligned}
          \log \mathcal{L}(\boldsymbol{\xi}_{\cdot \cdot 1}, \ldots, \boldsymbol{\xi}_{\cdot \cdot K} \mid \boldsymbol{Y}) &=
  -\frac12
  \sum_{i \in [p]} \sum_{j \in [J]} \sum_{z \in [N_{ij}]}
  \sigma_i^{-2}
  \bigl\{
    \tilde{Y}_{ijz}
    -
    \sum_{k=1}^K \sum_{|l| \le L_k} \phi_{kl}(t_{ijz}) \xi_{i (j+l) k}
  \bigr\}^2 + C \\
  &= -\frac12 (\bm y - \bm{A \xi})^\top \bm W (\bm y - \bm{A \xi}) + C \\
  &=
  -\frac12 \bigl\| \bm{W}^{1/2} (\bm y - \bm{A \xi}) \bigr\|_{\text{E}}^2 + C,
    \end{aligned}
    \label{eq:loglik-sum}
\end{equation} 
where $\| \cdot \|_{\text{E}}$ is the Euclidean norm for vectors, and $C$ is a constant that is independent of $(\boldsymbol{\xi}_{\cdot \cdot 1}, \ldots, \boldsymbol{\xi}_{\cdot \cdot K})$.

To obtain the matrix form of the joint log-prior $\sum_{k \in [K]}\log \pi(\boldsymbol{\xi}_{\cdot \cdot k}) $, we introduce the discrete Fourier transform matrix $\bm F_k \in \mathbb{C}^{J \times T_k}$ with $(j,r)$-th entries
\[
  [\bm F_k]_{j,r} = \frac{1}{\sqrt{2\pi T_k}} \exp(\mathrm{i} r \omega_j),
  \qquad j \in [J], \ r \in [T_k], \ \omega_j \in S_{J},
\]
for any $k \in [K]$. The vectorized discrete Fourier transform of the scores can be written as
\begin{align}
    \tilde{\bm \xi}_k := (\tilde{\xi}_{1k}(\omega_1), \ldots, \tilde{\xi}_{pk}(\omega_1), \ldots, \tilde{\xi}_{1k}(\omega_J), \ldots, \tilde{\xi}_{pk}(\omega_J))^{\top} = (\bm F_k \otimes \bm I_p) \bm{\xi}_k, \quad k \in [K],
    \label{vec_DFT}
\end{align}
where $\otimes$ denotes the Kronecker product and $\bm I_p$ is the $p \times p$ identity matrix. Moreover, we define the block-diagonal matrix
\begin{equation}
  \label{eq:Dk-def}
  \bm D_k := \bigoplus_{\omega_j \in S_{J}} \bm \Phi_k(\omega_j) \in \mathbb{C}^{p J \times p J}, \quad k \in [K],
\end{equation}
whose diagonal blocks are the $p \times p$ matrices $\{\bm \Phi_k(\omega_j); j \in [J]\}$ defined in the main text. Combining \eqref{vec_DFT} and \eqref{eq:Dk-def}, we have
\begin{equation}
    \begin{aligned}
    &\sum_{j \in [J]}
  \left\{ \tilde{\boldsymbol{\xi}}_{\cdot k}(\omega_j) \right\}^* \boldsymbol{\Phi}_k(\omega_j) \tilde{\boldsymbol{\xi}}_{\cdot k}(\omega_j) \\
  &=
  \tilde{\bm \xi}_{k}^* \bm D_k \tilde{\bm \xi}_{k} \\
  &=
  \bm \xi_{k}^*(\bm F_k \otimes \bm I_p)^* \bm D_k (\bm F_k \otimes \bm I_p) \bm \xi_{k}, \quad k \in [K].
    \end{aligned}
    \label{eq:prior-quad-step}
\end{equation}
According to \eqref{eq:Dk-def} and \eqref{eq:prior-quad-step}, we define the prior matrix $\bm Q$ as
\begin{align}
\bm Q_k=(\bm F_k\!\otimes\! \bm I_p)^{*}\!\left(\bigoplus_{\omega_j\in\mathcal{S}_J}\boldsymbol{\Phi}_k(\omega_j)\right)\!(\bm F_k\!\otimes\! \bm I_p),
\qquad
\bm Q=\mathrm{diag}(\bm Q_1,\ldots,\bm Q_K),
\label{Q_def_SM}
\end{align}
and the joint log-prior can be shown as
\begin{equation}
    \begin{aligned}
        \sum_{k \in [K]}\log \pi(\boldsymbol{\xi}_{\cdot \cdot k}) 
        &= -\frac{1}{2} \sum_{k \in [K]} \sum_{j \in [J]} 
\left\{ \tilde{\boldsymbol{\xi}}_{\cdot k}(\omega_j) \right\}^* \boldsymbol{\Phi}_k(\omega_j) \tilde{\boldsymbol{\xi}}_{\cdot k}(\omega_j) + C \\
&= -\frac{1}{2} \sum_{k \in [K]} {\bm \xi}_{k}^* \bm Q_k {\bm \xi}_{k} + C \\
&= -\frac{1}{2} \bm\xi^* \bm Q {\bm\xi} +C.
    \end{aligned}
    \label{eq:logprior-sum}
\end{equation}
Combining \eqref{eq:loglik-sum}, \eqref{eq:logprior-sum} and \eqref{posterior}, the MAP estimator for the scores is obtained by solving
\[
\arg\min_{\bm\xi} \frac{1}{2}\,\big\|\bm W^{1/2}\big(\bm y-\bm{A\xi}\big)\big\|_{\text{E}}^2\;+\;\frac{1}{2}\,\bm\xi^{*}\bm Q\, \bm\xi,
\]
as described in Algorithm \ref{alg:mfts_estimation}.

\textbf{Example}:  To make the above notation concrete, we consider a small illustrative example with $(p, J, K, L_1) = (2,3,1,1)$ and set $N_{ij} = 2$ for all $i$ and $j$. In this case, $T_1 = J + 2L_1 = 5$ and the total number of observations is $\tilde{N} = p J N_{ij} = 2 \times 3 \times 2 = 12$. 

In this example, we first explicitly display $\bm y$, $\bm \xi$, $\bm A$, and $\bm W$ corresponding to the matrix form of the log-likelihood $-\frac12 \bigl\| \bm W^{1/2} (\bm y - \bm{A \xi}) \bigr\|_{\text{E}}^2$. The observation vector $\bm y$ is
\begin{align*}
    \bm y = \bigl(
    \tilde{Y}_{111},
    \tilde{Y}_{112},
    \tilde{Y}_{121},
    \tilde{Y}_{122},
    \tilde{Y}_{131},
    \tilde{Y}_{132}, 
    \tilde{Y}_{211},
    \tilde{Y}_{212},
    \tilde{Y}_{221},
    \tilde{Y}_{222},
    \tilde{Y}_{231},
    \tilde{Y}_{232}
  \bigr)^\top \in \mathbb{R}^{12}.
\end{align*}
The score vector $\bm\xi$ is 
\begin{align*}
      \bm\xi =
  \bigl(
    \xi_{1,0,1},
    \xi_{2,0,1},
    \xi_{1,1,1},
    \xi_{2,1,1},
    \xi_{1,2,1},
    \xi_{2,2,1},
    \xi_{1,3,1},
    \xi_{2,3,1},
    \xi_{1,4,1},
    \xi_{2,4,1}
  \bigr)^\top \in \mathbb{R}^{10}.
\end{align*}
The weight matrix $\bm W$ is
\begin{equation*}
  \bm W
  =
  \operatorname{diag}\bigl(
    \underbrace{\sigma_1^{-2}, \dots, \sigma_1^{-2}}_{\text{6 entries}},
    \underbrace{\sigma_2^{-2}, \dots, \sigma_2^{-2}}_{\text{6 entries}}
  \bigr).
\end{equation*}
The design matrix $\bm A$ is
\begin{equation*}
\resizebox{\textwidth}{!}{$
\bm A = 
\begin{pmatrix}
\phi_{1(-1)}(t_{111}) & 0 & \phi_{10}(t_{111}) & 0 & \phi_{11}(t_{111}) & 0 & 0 & 0 & 0 & 0 \\
\phi_{1(-1)}(t_{112}) & 0 & \phi_{10}(t_{112}) & 0 & \phi_{11}(t_{112}) & 0 & 0 & 0 & 0 & 0 \\
0 & 0 & \phi_{1(-1)}(t_{121}) & 0 & \phi_{10}(t_{121}) & 0 & \phi_{11}(t_{121}) & 0 & 0 & 0 \\
0 & 0 & \phi_{1(-1)}(t_{122}) & 0 & \phi_{10}(t_{122}) & 0 & \phi_{11}(t_{122}) & 0 & 0 & 0 \\
0 & 0 & 0 & 0 & \phi_{1(-1)}(t_{131}) & 0 & \phi_{10}(t_{131}) & 0 & \phi_{11}(t_{131}) & 0 \\
0 & 0 & 0 & 0 & \phi_{1(-1)}(t_{132}) & 0 & \phi_{10}(t_{132}) & 0 & \phi_{11}(t_{132}) & 0 \\
0 & \phi_{1(-1)}(t_{211}) & 0 & \phi_{10}(t_{211}) & 0 & \phi_{11}(t_{211}) & 0 & 0 & 0 & 0 \\
0 & \phi_{1(-1)}(t_{212}) & 0 & \phi_{10}(t_{212}) & 0 & \phi_{11}(t_{212}) & 0 & 0 & 0 & 0 \\
0 & 0 & 0 & \phi_{1(-1)}(t_{221}) & 0 & \phi_{10}(t_{221}) & 0 & \phi_{11}(t_{221}) & 0 & 0 \\
0 & 0 & 0 & \phi_{1(-1)}(t_{222}) & 0 & \phi_{10}(t_{222}) & 0 & \phi_{11}(t_{222}) & 0 & 0 \\
0 & 0 & 0 & 0 & 0 & \phi_{1(-1)}(t_{231}) & 0 & \phi_{10}(t_{231}) & 0 & \phi_{11}(t_{231}) \\
0 & 0 & 0 & 0 & 0 & \phi_{1(-1)}(t_{232}) & 0 & \phi_{10}(t_{232}) & 0 & \phi_{11}(t_{232})
\end{pmatrix}.
$}
\end{equation*}

For the matrix form of log-prior $-\frac{1}{2} \bm\xi^* \bm Q \bm{\xi}$, we display $\bm Q$ explicitly. The discrete Fourier transform matrix $\bm F_1$ has entries $[\bm F_1]_{j,r} = \frac{1}{\sqrt{2\pi T_1}} \exp(\mathrm{i} r \omega_j)$. For each $\omega_j$, the matrix $\bm\Phi_1(\omega_j)$ is a $2 \times 2$ diagonal matrix with elements $\bigl(\tilde{\eta}_{11}^{-1}(\omega_j),\tilde{\eta}_{21}^{-1}(\omega_j)\bigr)$, where $\tilde{\eta}_{ik}(\cdot)$ is defined in \eqref{margin_score_spec}. The matrix $\bm Q$ is 
\begin{align*}
    \bm Q = \bm Q_1 = (\bm F_1 \otimes \bm I_2)^\ast \bm D_1 (\bm F_1 \otimes \bm I_2),
\end{align*}
where $\bm D_1 = \bigoplus_{\omega_j \in S_J} \bm\Phi_1(\omega_j)$ and $\bm I_2$ is the $2 \times 2$ identity matrix.

\subsection{Selection Criteria for \texorpdfstring{$K$}{1}, \texorpdfstring{$L_k$}{1} and \texorpdfstring{$h_{\text{max}}$}{1}}\label{parameter_selection}
In this subsection, we establish the selection criteria for $K$ and $L_k$ in the finite truncation approximation \eqref{truncate_marginal_dfpca} and $h_{\text{max}}$ for the Bartlett lag-window estimators \eqref{individual_spec_est}.

For the number of components \( K \), we adopt the variance explained ratio criterion \citep{tan2024graphical, guo2024unified}.
Specifically, we choose \( K \) by maximizing: 
\begin{align*}
    r(k) = \frac{\int_{-\pi}^{\pi} \hat{\eta}_k(\omega) \mathrm{d} \omega}{\int_{-\pi}^{\pi} \hat{\eta}_{k+1}(\omega) \mathrm{d} \omega}, \quad k = 1, \ldots, K_{\operatorname{max}},
\end{align*}
where \( K_{\operatorname{max}} \) is a predetermined upper limit, and \( \hat{\eta}_k(\omega) \) denotes the \( k \)th eigenvalue of the estimated marginal spectral density kernel \( \hat{f}_{\mathcal{S}}(\cdot, \cdot \mid \omega) \) defined in \eqref{margin_spec_est}.

The value of $L_k$ is determined such that the cumulative norm of estimated functional filters $\hat{\phi}_{kl}(\cdot)$ satisfies
\begin{align*}
    \sum_{|l| \leq L_k} \|\hat{\phi}_{kl}\|^2 \geq 1 - \epsilon, \quad k \geq 1,
\end{align*}
where $\epsilon$ denotes the tolerance threshold. This selection criterion is widely adopted in other literature on frequency-domain approaches \citep{hormann2015dynamic, tan2024graphical, guo2024unified}. Throughout this study, we choose $\epsilon = 0.1$, and set a upper bound $L_{\text{max}}$ for all $L_k$, i.e., when $\sum_{|l| \leq L_{\text{max}}} \|\hat{\phi}_{kl}\|^2 < 0.9$, we simple set $L_k = L_{\text{max}}$.

To select the truncation number $h_{\text{max}}$, we follow the rule of thumb \citep{rubin2020sparsely, guo2024unified} and set \( h_{\text{max}} = (J \overline{N})^{\frac{1}{4}} \), where \( J \) denotes the sample size and \( \overline{N} = \left( \sum_{i \in [p]} \sum_{j \in [J]} N_{ij} \right)/(pJ) \) is the average number of observations across all curves.

\

\subsection{Algorithm Details for Scores Extraction}\label{MAP_details}

To obtain the Maximum A Posteriori (MAP) estimator for the scores $\{\boldsymbol{\xi}_{\cdot \cdot k}; k \in [K]\}$, we implement a gradient ascend algorithm with respect to the log-transformed posterior distribution $\log \pi\left(\boldsymbol{\xi}_{\cdot \cdot 1}, \ldots, \boldsymbol{\xi}_{\cdot \cdot K} \mid \boldsymbol{Y}\right)$ with definition in \eqref{posterior}. The corresponding gradient is given by
\begin{align*}
 &\frac{\partial \log \pi\left(\boldsymbol{\xi}_{\cdot \cdot 1}, \ldots, \boldsymbol{\xi}_{\cdot \cdot K} \mid \boldsymbol{Y}\right)}{\partial \boldsymbol{\xi}_{\cdot \cdot k}} \nonumber \\
 &= \bigg\{ \frac{\partial \log \pi\left(\boldsymbol{\xi}_{\cdot \cdot 1}, \ldots, \boldsymbol{\xi}_{\cdot \cdot K} \mid \boldsymbol{Y}\right)}{\partial \boldsymbol{\xi}_{1 \cdot k}} , \ldots, \frac{\partial \log \pi\left(\boldsymbol{\xi}_{\cdot \cdot 1}, \ldots, \boldsymbol{\xi}_{\cdot \cdot K} \mid \boldsymbol{Y}\right)}{\partial \boldsymbol{\xi}_{p \cdot k}} \bigg\} ^{\top},
\end{align*}
where 
\begin{align*}
 &\frac{\partial \log \pi\left(\boldsymbol{\xi}_{\cdot \cdot 1}, \ldots, \boldsymbol{\xi}_{\cdot \cdot K} \mid \boldsymbol{Y}\right)}{\partial \boldsymbol{\xi}_{i \cdot k}} \nonumber \\
 &=  - {\sigma_i}^{-2}\Bigg( \sum_{k^{\prime} \in [K]} \boldsymbol{\xi}_{i\cdot k^{\prime}}\sum_{j \in [J]}{\boldsymbol{\phi}}^{*}_{i k^{\prime} j}{\boldsymbol{\phi}}_{i k j} - \sum_{j \in [J]} \tilde{\boldsymbol{Y}}_{ij}{\boldsymbol{\phi}}_{ikj}\Bigg) \nonumber  
- \Bigg[ \operatorname{Re}\bigg\{ \sum_{j \in [J]} {\Phi}_k(\omega_j) \boldsymbol{\xi}_{\cdot \cdot k}   \boldsymbol{\rho}_k\left(\omega_j\right) \boldsymbol{\rho}_k\left(\omega_j\right)^{*}\bigg\} \Bigg]_{i.},
\end{align*}
for any $i \in [p]$. Here, $\tilde{\boldsymbol{Y}}_{ij}:= \{Y_{ij1} - {\mu}_i(t_{ij1}), \ldots, Y_{ijN_{ij}} - {\mu_i}(t_{ijN_{ij}})\}$ denotes the demeaned observations, ${\boldsymbol{\phi}}_{ikj}$ is a $N_{ij} \times (J+2L_k)$ matrix with the $(z, j + l + L_k)$th element being $\phi_{kl}(t_{ijz} )$ for $|l| \leq L_k$ and 0 otherwise, and $\boldsymbol{\rho}_k\left(\omega\right):=\frac{1}{\sqrt{2\pi(J+2 L_k)}}\left\{\exp \left(\mathrm{i} 1 \omega\right), \ldots, \exp \left(\mathrm{i}\left(J+2 L_k\right) \omega\right)\right\}^{*}$. The operators $\operatorname{Re}(\cdot)$ and $[\cdot]_{i\cdot}$ extract the real part and the $i$th row of a matrix, respectively.

\

\subsection{Cross-individual Dependence Structure}\label{score_gen}
For cross-individual dependence, we introduce graphical structures for the scores $\boldsymbol{\xi}_{\cdot jk}$ \citep{zapata2022partial, tan2024graphical}. Specifically, we first sample edges between $p$ distinct subjects independently with probability $\kappa / p$, yielding the edge set $E$. Given $E$, the $(i_1, i_2)$-th element of the precision matrix $\boldsymbol{\Theta}_k^b$, i.e., the inverse covariance matrix, for the innovations $\boldsymbol{b}_{\cdot jk}$ is specified as
\begin{align*}
    \left[\boldsymbol{\Theta}_k^b\right]_{i_1, i_2} =
\begin{cases}
\frac{1}{5} \exp\left(\frac{k}{10}\right) & \text{if } i_1 = i_2, \\
R_{i_1, i_2} \exp\left(\frac{k}{10}\right)/5 & \text{if } (i_1, i_2) \in E \text{ and } i_1 \neq i_2, \\
0 & \text{if } (i_1, i_2) \notin E,
\end{cases}
\end{align*}
where $R_{i_1, i_2} \sim \text{Unif}([-r_2, -r_1] \cup [r_1, r_2])$. Based on $\boldsymbol{\Theta}_k^b$, $\boldsymbol{b}_{\cdot jk}$ are sampled from the multivariate Gaussian distribution $\mathcal{N}(\boldsymbol{0}, (\boldsymbol{\Theta}_k^b)^{-1})$. In the following, we set $\kappa = 3$ and $[r_1,r_2] = [0.1, 0.35]$.

\end{document}